\newtheorem{mydef}{Definition}
\newtheorem{myteo}{Theorem}
\newtheorem{mypro}{Proposition}
\newtheorem{mycor}{Corollary}
\newtheorem{mylem}{Lemma}
\newtheorem{myexe}{Example}
\begin{document}

\title{The Packing Radius of a Code and Partitioning  Problems: the Case for Poset Metrics}

\author{Rafael Gregorio Lucas D'Oliveira, Marcelo Firer
\thanks{R. G. L. D'Oliveira was partially supported by CNPq. M. Firer was partially supported by FAPESP grant 2007/ 56052-8. }%
\thanks{R. G. L. D'Oliveira is with IMECC-UNICAMP, Universidade Estadual de Campinas, CEP 13083-
859, Campinas, SP, Brazil (e-mail: \newline rgldoliveira@gmail.com) }
\thanks{M. Firer is with IMECC-UNICAMP, Universidade Estadual de Campinas, CEP 13083-
859, Campinas, SP, Brazil (e-mail: mfirer@gmail.com)}}

\date{\today}
\maketitle

\begin{abstract}
Until this work, the packing radius of a poset code was only known in the cases where the poset was a chain, a hierarchy, a union of disjoint chains of the same size, and for some families of codes. Our objective is to approach the general case of any poset. To do this, we will divide the problem into two parts.

The first part consists in finding the packing radius of a single vector. We will show that this is equivalent to a generalization of a famous NP-hard problem known as ``the partition problem''. Then, we will review the main results known about this problem giving special attention to the algorithms to solve it. The main ingredient to these algorithms is what is known as the differentiating method, and therefore, we will extend it to the general case.

The second part consists in finding the vector that determines the packing radius of the code. For this, we will show how it is sometimes possible to compare the packing radius of two vectors without calculating them explicitly.
\end{abstract}

\begin{IEEEkeywords}
Error correction codes, poset codes.
\end{IEEEkeywords}

\section{Introduction}

An important concept of coding theory is that of the packing radius of a code. When using the Hamming metric this concept is overshadowed by that of the minimum distance since it is determined completely by it, i.e. if $C$ is a code and $d_H(C)$ is the minimum distance of $C$ in the Hamming metric, it is well known that the packing radius of $C$ is
 \[ R_{d_H}(C) = \left \lfloor \frac{d_H (C) - 1}{2} \right \rfloor .\]

In this work we will consider the problem of finding the packing radius in the case of  poset metrics. These metrics where first introduced by Brualdi et al. \cite{brua95} generalizing on the work of Neiderreiter \cite{nied91}. An interesting property of these metrics is that the packing radius is not necessarily determined by the minimum distance. What we do have, nonetheless, is the following inequality
\[ \left \lfloor \dfrac{d_P (C)-1}{2} \right \rfloor \leq R_{d_P} (C) \leq d_P (C) -1 .\]
It is known that the upper bound is attained when the poset is a chain \cite{fire10}.

Until this work, to the authors' knowledge, the packing radius of a poset code was only known in the following cases: chain posets \cite{fire10}, hierarchical posets \cite{feli11}, disjoint union of chains of the same size \cite{pane07}, and for some families of codes \cite{fire11}. We will approach the general poset case. To do this we will divide our problem in two.

The first part consists in determining the packing radius of a single vector. We will see that this is equivalent to solving a generalization, which we will call ``the poset partition problem'', of a famous NP-hard problem known as ``the partition problem''. We will then take a look at the best known algorithm for solving the partition problem and generalize it to the poset partition problem. The first time the problem of finding the packing radius of a poset code was identified, in some sense, as a partitioning problem was in \cite{yoon04}.

The second part consists in finding which code-word determines the packing radius of the code. To do this we will show how some times it is possible to compare the packing radius of two vectors without calculating them explicitly.

\section{Preliminaries}

\subsection{The Poset Metric}

Let $[n] = \{1,2, \ldots , n \}$ be a finite set and $ \preceq$ be a partial order on $[n]$. We call the pair $P = ( [n] , \preceq)$ a poset and often identify $P$ with $[n]$. An ideal in $P$ is a subset $J \subseteq P$ with the property that if $x \in J$ and $y \preceq x$ then $y \in J$. The ideal generated by a subset $X \subseteq P$ is the smallest ideal containing $X$ and is denoted by $\left \langle X \right \rangle$. A poset is called a chain if every two elements are comparable, and an anti-chain if none are. The length of an element $x \in P$ is the cardinality of the largest chain contained in $\left \langle \{x \} \right \rangle$.

Let $q$ be the power of a prime, $\mathbb{F}_q$ the field with $q$ elements and ${\mathbb{F}_q^n}$ the vector space of $n$-tuples over $\mathbb{F}_q$. We denote the coordinates of a vector $x \in {\mathbb{F}_q^n}$ by $x = ( x_1, x_2, \ldots , x_n )$. 

A poset $P = ( [n] , \preceq)$ induces a metric $d_P$, called the $P$-distance, in ${\mathbb{F}_q^n}$ defined as 
\[ d_P (v,w) = \left | \left \langle  supp(v-w)  \right \rangle \right | \]
where $supp(x) = \{ i \in [n] : x_i \neq 0\}$. The distance $\omega_P(v) = d_P(v,0)$ is called the $P$-weight of $v$. 

Note that if $P$ is an anti-chain then $d_P$ is the Hamming distance. Because of this, when $P$ is an anti-chain we will denote it by $H$.

Given a linear code $C \subseteq {\mathbb{F}_q^n}$ and a poset $P = ( [n] , \preceq)$, we define the minimum distance of $C$ as $d_P (C) = min \{ \omega_P (v) : v \in C - \{0 \} \}$. The packing radius of $C$ is denoted by $R_{d_P} (C)$ and is defined as the largest positive integer such that \[B_P (x,R_{d_P} (C)) \cap B_P (y,R_{d_P} (C)) = \varnothing \] for every $x,y \in C$. Since $C$ is linear, $z=x-y \in C$ and therefore the packing radius is the largest positive integer such that \[B_P (0,R_{d_P} (C)) \cap B_P (z,R_{d_P} (C)) = \varnothing \] for every $z \in C$. 

It is well known that in the Hamming case, the packing radius of a linear code $C \subseteq {\mathbb{F}_q^n}$ is given by   \[ R_{d_H}(C) = \left \lfloor \frac{d_H (C) - 1}{2} \right \rfloor .\]
In the general case, however the following inequality is true:
\[ \left \lfloor \dfrac{d_P (C)-1}{2} \right \rfloor \leq R_{d_P} (C) \leq d_P (C) -1 .\]

\section{The Packing Radius of a Vector}

We begin this section by defining the packing radius of a vector.

\begin{mydef}
Let $x \in \mathbb{F}_q^n$ and $d$ be a metric over $\mathbb{F}_q^n$. The \textbf{packing radius of $x$} is the largest integer $r$ such that 
\[ B(0,r) \cap B(x,r) = \emptyset \]
and is denoted by $R_d (x)$.
\end{mydef}

Note that the packing radius of $x$ is the packing radius of the code $C = \{ 0,x \}$.

Next, we show that the packing radius of a linear code is the smallest of the packing radius of its code-words.

\begin{mypro}
Let $C \subseteq \mathbb{F}_q^n$ be a linear code and $d$ a metric over $\mathbb{F}_q^n$. Then, 
\[ R_d (C) = \min_{x\in C^*} R_d (x) .\]
\end{mypro}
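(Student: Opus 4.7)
The plan is to leverage the reformulation of the code's packing radius already derived in the preliminaries: because $C$ is linear, $R_d(C)$ equals the largest integer $r$ such that $B(0,r)\cap B(z,r)=\varnothing$ for \emph{every} $z\in C^*$. The definition of $R_d(x)$ records exactly the largest $r$ for which this disjointness holds for the single vector $x$. So the proposition reduces to the statement that the largest $r$ satisfying a family of disjointness conditions is the minimum, over the family, of the largest $r$'s satisfying each individually.

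First I would observe the monotonicity of balls: if $r\leq s$ then $B(0,r)\subseteq B(0,s)$ and $B(z,r)\subseteq B(z,s)$, so $B(0,s)\cap B(z,s)=\varnothing$ implies $B(0,r)\cap B(z,r)=\varnothing$. This tells us that for each fixed $z$ the set of integers $r$ for which $B(0,r)\cap B(z,r)=\varnothing$ is a downward-closed set of nonnegative integers whose maximum is precisely $R_d(z)$; equivalently, $B(0,r)\cap B(z,r)=\varnothing$ if and only if $r\leq R_d(z)$.

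Next I would set $r^{*}:=\min_{x\in C^{*}} R_d(x)$ and prove the two inequalities.
For $R_d(C)\leq r^{*}$: take any $z\in C^{*}$ realizing the minimum. Since the balls around $0$ and $z$ must be disjoint at radius $R_d(C)$, by the previous observation $R_d(C)\leq R_d(z)=r^{*}$.
For $R_d(C)\geq r^{*}$: for every $z\in C^{*}$ we have $r^{*}\leq R_d(z)$, so by the monotonicity step $B(0,r^{*})\cap B(z,r^{*})=\varnothing$. Since this holds simultaneously for all $z\in C^{*}$, the radius $r^{*}$ satisfies the defining condition of $R_d(C)$, hence $r^{*}\leq R_d(C)$.

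There is no real obstacle here; the only point to be careful about is that $R_d(C)$ was defined as the largest $r$ for which the pairwise disjointness holds \emph{between all pairs} of codewords, and one must invoke the linearity of $C$ to collapse this to the one-sided condition involving $0$ and an arbitrary $z\in C^{*}$. That reduction is already carried out in the preliminaries, so the argument above completes the proof.
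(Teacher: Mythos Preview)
Your proof is correct and follows essentially the same route as the paper's: both arguments pick a minimizer $y\in C^{*}$, use monotonicity of balls to show disjointness at radius $R_d(y)$ holds for every $z\in C^{*}$ (giving $R_d(C)\geq R_d(y)$), and then note that disjointness fails at $R_d(y)+1$ for $y$ itself (giving $R_d(C)\leq R_d(y)$). The only difference is that you make the monotonicity step explicit, whereas the paper invokes it tacitly.
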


\begin{proof}
Let $y \in C^*$ such that $R_d (y) = \underset{x\in C^*}{\operatorname{min}} R_d (x)$. By definition, for every $x \in C^*$
\[ B(0, R_d (x) ) \cap B(x, R_d (x) ) = \emptyset .\]
But, $R_d (y) \leq R_d(x)$, and therefore, 
\[ B(0, R_d (y) ) \cap B(x, R_d (y) ) = \emptyset .\]
Since
\[ B(0, R_d (y)+1 ) \cap B(y, R_d (y)+1 ) = \emptyset ,\]
$R_d (C) = R_d (y)$.
\end{proof}

This result motivates the following definition:

\begin{mydef}
Let $C \subseteq \mathbb{F}_q^n$ be a linear code and $d$ a metric over $\mathbb{F}_q^n$. A code-word $x \in C^*$ such that $R_d (C) = R_d (x)$ is called a \textbf{packing vector} of $C$.
\end{mydef}

The packing vector of a linear code determines its packing radius. In this section, we will focus on the problem of finding the packing radius of a vector, and leave the problem of finding the packing vector of a linear code, and thus its packing radius, for later.

A possible brute force algorithm for finding the packing radius of a vector $v$ is the following: Start with $r=1$. List the elements of $B(0, r)$ and $B(v , r)$. If there are any repeated elements terminate, otherwise add one to $r$ and start again. After the algorithm is terminated the packing radius will be $r-1$.

\begin{myexe}
Let $P$ be a chain of height three such that $1 \preceq 2 \preceq 3$ and $d_P$ the metric induced by this poset on $\mathbb{F}_2^3$. Lets find the packing radius $v = 001$ using the brute force algorithm described above. The following table lists the elements found in the spheres of size $r$ and center $c$.
\[
\begin{tabular}{ | c | c | c | } \hline
& c=000 & c=001 \\ \hline
r=1 & 100 & 101 \\ \hline
r=2 & 110 010 & 111 011 \\ \hline
r=3 & 111 101 001 011 & 110 100 000 010 \\ \hline
\end{tabular}
\]
Since there are no repetitions of elements on the two columns until $r=3$, the packing radius of $v$ is $R_{d_P} (001) = 2$.
\end{myexe}

In the example given above, the vectors of the two columns are related by a translation $T$ of the form $T(x) = v-x$ with the property that $x \in S(0,r)$ iff $T(x) \in S(v,r)$. The reason for this is that the poset metric above, as all other poset metrics, is translation invariant, i.e. for all $ x,y,z \in \mathbb{F}_q^n$, $d_P (x,y) = d_P (x+z , y+z )$.

We make the following definition:

\begin{mydef}
Let $x, v \in \mathbb{F}_q^n$. We define the \textbf{complement of $x$ with respect to $v$} as $x^v = v-x$.
\end{mydef}

We now make our last statement precise.

\begin{mylem}
Let $d$ be a translation invariant metric over $\mathbb{F}_q^n$, $x,v \in \mathbb{F}_q^n$, and $r \in \mathbb{R}$, such that $r \geq 0$. Then $x \in S(0,r)$ iff $x^v \in S(v,r)$.
\end{mylem}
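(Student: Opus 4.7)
The plan is to reduce the biconditional to the single distance identity $d(x,0) = d(v-x,v)$, and then to derive that identity directly from translation invariance (together with symmetry of the metric).

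First I would unfold the definition of the sphere. Membership $x \in S(0,r)$ amounts to $d(x,0) = r$, and membership $x^v \in S(v,r)$ amounts to $d(v-x,v) = r$. Thus the lemma is equivalent to the equality $d(x,0) = d(v-x,v)$ for all $x,v \in \mathbb{F}_q^n$, and the parameter $r$ can then be quantified away.

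Next I would apply translation invariance to the right-hand side with translation vector $-v$, obtaining $d(v-x,v) = d(-x,0)$. It remains only to verify that $d(-x,0) = d(x,0)$. This follows by a short chain: symmetry of the metric gives $d(-x,0) = d(0,-x)$, and a second application of translation invariance, this time with translation vector $x$, gives $d(0,-x) = d(x,0)$. Splicing these together yields the required identity.

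I do not foresee any real obstacle; the lemma is essentially a one-line consequence of translation invariance once one observes that the map $x \mapsto x^v = v-x$ sends the pair $(0,x)$ to the pair $(v,v-x)$, keeping the distance between the two arguments unchanged. The only minor point worth flagging in the write-up is that the argument silently uses symmetry of $d$ in addition to translation invariance; this is legitimate because $d$ is assumed to be a metric. An equivalent presentation would be to note that $x \mapsto v-x$ is an involutive isometry carrying $0$ to $v$, from which the lemma follows immediately.
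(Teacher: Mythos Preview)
Your proof is correct and follows essentially the same idea as the paper's: both reduce the lemma to the identity $d(v,x^v)=d(x,0)$ and establish it via translation invariance. The paper's one-line argument is slightly more direct, translating by $-x^v$ to get $d(v,x^v)=d(v-x^v,0)=d(x,0)$ immediately, which avoids your detour through $d(-x,0)$ and the appeal to symmetry.
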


\begin{proof}
Since $d$ is translation invariant, 
\[ d(v,x^v) = d(v-x^v,0) = d(x, 0) .\]
\end{proof}

Therefore, when the metric being considered is translation invariant, as is the case for a poset metric, the packing radius of a vector $v \in \mathbb{F}_q^n$ can be found in the following way: We make two lists; in the first we put the vectors of $\mathbb{F}_q^n$ ordered by weight, and for every vector of this list we put its complement in the second list. When a vector $x$ which has already appeared in the second list appears also in the first, we can determine the packing radius of $v$ which will be $\omega_P (x) - 1$.

\begin{myexe}
Using the method just described on the last example we have:
\[
\begin{tabular}{ | c | c | } \hline
List 1 & List 2 \\ \hline
100 & 101 \\ \hline
110 & 111 \\ \hline
010 & 011 \\ \hline
111 & 110 \\ \hline
\end{tabular}
\]
Since $111$ already appeared in the second row, $R_{d_P}(001) = \omega_P (111) - 1 = 2$.
\end{myexe}

We will use the following notation to simplify our expressions.

\begin{mydef}
Let $\omega$ be a weight over $\mathbb{F}_q^n$ and $x,y \in \mathbb{F}_q^n$. We denote the maximum weight between $x$ and $y$ as
\[ \omega^\vee (x,y) = \max \{ \omega (x) , \omega (y) \} .\]
\end{mydef}

In general we have the following expression for the packing radius:

\begin{myteo}
Let $d$ be a translation invariant metric over $\mathbb{F}_q^n$ and $v \in \mathbb{F}_q^n$. Then,
\[ R_d (v) = \min_{x \in \mathbb{F}_q^n} \{ \omega^\vee (x,x^v) \} - 1 .\]
\end{myteo}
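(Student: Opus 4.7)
The plan is to unpack the definition of $R_d(v)$ in terms of when the two balls $B(0,r)$ and $B(v,r)$ first share a point, and then translate that condition into the language of weights using the translation invariance of $d$.

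First I would observe that, by definition, $R_d(v)$ is the largest integer $r$ with $B(0,r) \cap B(v,r) = \emptyset$; equivalently, $R_d(v) + 1$ is the smallest $r$ for which the intersection is nonempty. So the task reduces to computing
\[ r^* := \min\{ r \in \mathbb{N} : \exists\, x \in \mathbb{F}_q^n,\ d(x,0) \leq r \text{ and } d(x,v) \leq r \} \]
and showing $r^* = \min_{x} \omega^\vee(x, x^v)$.

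Next I would rewrite the two inequalities $d(x,0) \leq r$ and $d(x,v) \leq r$ as bounds on weights. The first is simply $\omega(x) \leq r$. For the second, the key step, I would use that $d$ is translation invariant together with the symmetry of a metric: $d(x,v) = d(v,x) = d(v-x, 0) = \omega(v-x) = \omega(x^v)$. Thus a vector $x$ lies in $B(0,r) \cap B(v,r)$ if and only if both $\omega(x) \leq r$ and $\omega(x^v) \leq r$, which is equivalent to $\omega^\vee(x, x^v) \leq r$.

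From here the conclusion is immediate. For any $r < \min_{x} \omega^\vee(x, x^v)$, no $x$ satisfies $\omega^\vee(x, x^v) \leq r$, so the two balls are disjoint and $R_d(v) \geq \min_{x} \omega^\vee(x, x^v) - 1$. Conversely, taking $x_0$ realizing the minimum gives $x_0 \in B(0, r^*) \cap B(v, r^*)$ with $r^* = \min_{x} \omega^\vee(x, x^v)$, so $R_d(v) < r^*$. Combining the two inequalities yields the stated identity. There is no real obstacle here; the only point that requires a little care is the identification $d(x,v) = \omega(x^v)$, which is where translation invariance (rather than a weaker property) is actually used.
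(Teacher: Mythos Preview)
Your proof is correct and follows essentially the same approach as the paper: both arguments use translation invariance to identify $d(x,v)$ with $\omega(x^v)$ (the paper packages this as Lemma~1), then show that $B(0,r)\cap B(v,r)\neq\emptyset$ is equivalent to the existence of $x$ with $\omega^\vee(x,x^v)\leq r$, from which the formula follows. The only cosmetic difference is that the paper picks a minimizer $y$ first and argues one direction by contradiction, whereas you phrase it via the threshold $r^*$; the logical content is identical.
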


\begin{proof}
Let $y \in \mathbb{F}_q^n$ be such that
\[ \omega^\vee (y,y^v) = 	 \min_{x \in \mathbb{F}_q^n} \{ \omega^\vee (x,x^v) \} \]
and $R = \omega^\vee (y,y^v)$. Suppose there exists $ z \in B(0, R-1) \cap B(v, R-1)$. Since $z \in B(0, R-1)$, we have $\omega (z) \leq R-1$, and therefore, 
\[ \omega (z) < \omega^\vee (y,y^v) .\]
Since $z \in B(v, R-1)$, by Lemma 1, $z^v \in B(0, R-1)$. But then, $\omega (z^v) \leq R-1$, and therefore,
\[ \omega (z^v) < \omega^\vee (y,y^v) .\]
But if this is true, then 
\[ \omega^\vee (z,z^v) < \omega^\vee (y,y^v) ,\]
a contradiction. Thus, $B(0, R-1) \cap B(v, R-1) = \emptyset$, and since $y \in B(0,R) \cap B(v,R)$, the theorem is proved.
\end{proof}

The following definition is quite natural.

\begin{mydef}
Let $d$ be a translation invariant metric over $\mathbb{F}_q^n$ and $v \in \mathbb{F}_q^n$. We say $y \in \mathbb{F}_q^n$ is a \textbf{radius vector} of $v$ if 
\[ \omega^\vee (y,y^v) = 	 \min_{x \in \mathbb{F}_q^n} \{ \omega^\vee (x,x^v) \} .\]
\end{mydef}

We can therefore find the packing radius of a vector $v \in \mathbb{F}_q^n$ by finding a radius vector of $v$. If we search the whole of $\mathbb{F}_q^n$ our search space will have size $q^n$. Our following results will reduce the size of our search space. We will be using the following notation.

\begin{mydef}
The \textbf{Iverson bracket}, introduced in \cite{iver62}, is defined as:
\[ [P] = \left\{\begin{matrix}
1 & \text{if P is true}\\ 
0 & \text{if P is false}
\end{matrix}\right.
\]
where $P$ is a statement that can be true or false.
\end{mydef}

This notation was popularized by Knuth in \cite{knut92}.

\begin{mylem}
Let $P$ be a poset and $v \in \mathbb{F}_q^n$. Then, there exists $x \in \mathbb{F}_q^n$ such that:
\begin{enumerate}
\item $x$ is a radius vector of $v$ ;
\item $supp(x) \subseteq supp(v)$ ;
\item for all $i \in P$, $x_i = v_i$ or $x_i = 0$.
\end{enumerate}
\end{mylem}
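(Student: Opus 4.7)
The plan is to start with an arbitrary radius vector $y$ of $v$, whose existence is guaranteed by Theorem~1 (the minimum over the finite set $\mathbb{F}_q^n$ is attained), and to modify it coordinate by coordinate to obtain a new vector $x$ that satisfies conditions (2) and (3) while ensuring $\omega^\vee(x,x^v) \leq \omega^\vee(y,y^v)$. Since $y$ already attains the minimum, equality will hold and $x$ will itself be a radius vector, giving condition (1).

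Concretely, I would set $x_i = y_i$ whenever $y_i \in \{0, v_i\}$ and $x_i = 0$ otherwise. By construction $x_i \in \{0, v_i\}$ for every $i \in P$, which is condition (3), and condition (2) follows because $x_i \neq 0$ forces $x_i = v_i \neq 0$. What remains is to verify the weight inequality $\omega^\vee(x,x^v) \leq \omega^\vee(y,y^v)$. The key remark is that $\omega_P(z) = |\langle supp(z) \rangle|$ is monotone in $supp(z)$, so it suffices to establish the two inclusions $supp(x) \subseteq supp(y)$ and $supp(x^v) \subseteq supp(y^v)$.

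The first inclusion is immediate, since the construction only replaces coordinates by zero. For the second I would argue by cases on $y_i$: when $y_i \in \{0, v_i\}$ we have $x_i = y_i$ and thus $(x^v)_i = (y^v)_i$ trivially; when $y_i \notin \{0, v_i\}$ we have $x_i = 0$, so $(x^v)_i = v_i$ while $(y^v)_i = v_i - y_i$, and because $y_i \neq v_i$, whenever $(x^v)_i \neq 0$ (equivalently $v_i \neq 0$) one automatically has $(y^v)_i \neq 0$. I do not foresee a serious obstacle; the only point requiring care is precisely this last case, where one must keep track of both $(x^v)_i$ and $(y^v)_i$ in order to obtain a genuine containment of supports rather than merely a bound on their cardinalities.
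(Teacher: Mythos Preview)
Your proposal is correct and follows essentially the same approach as the paper: start from an arbitrary radius vector, replace each coordinate by either $0$ or $v_i$, and verify the two support inclusions $supp(x)\subseteq supp(y)$ and $supp(x^v)\subseteq supp(y^v)$ to conclude. The only difference is the replacement rule: the paper sets $x_i=v_i$ whenever $z_i\neq 0$ (and $0$ otherwise), whereas you set $x_i=v_i$ only when $y_i=v_i$; both choices yield the required inclusions, so the argument goes through unchanged.
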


\begin{proof}
Let $z \in \mathbb{F}_q^n$ be a radius vector of $v$. Define $x \in \mathbb{F}_q^n$ such that
\[ x_i = v_i [ z_i \neq 0 ] .\]
By its definition, $x$ satisfies the third condition. To show that it satisfies the second condition, let $i \in supp(x)$. Then, $x_i \neq 0$, but for this to happen we must have $v_i \neq 0$ and $z_i \neq 0$. Thus,
\[ i \in supp(v) \cap supp(z) .\]
Finally, we must show that it satisfies the first condition. Note that
\begin{align}
x_i^v &= v_i - x_i \nonumber \\
&= v_i - v_i [ z_i \neq 0 ] \nonumber \\
&= v_i (1-[z_i \neq 0]) \nonumber \\
&= v_i [z_i = 0] \nonumber .
\end{align}
Since $z_i^v = v_i - z_i$, 
\[ [z_i = 0 ] = [z_i^v = v_i ] \]
and therefore,
\[ x_i^v = v_i [z_i^v = v_i] .\]
Thus, if $i \in supp(x^v)$, then $x_i^v \neq 0$, and therefore, $v_i \neq 0$ and $z_i^v = v_i$, hence, $z_i^v \neq 0$. Therefore,
\[ supp(x^v) \subseteq supp(z^v) .\]
But we have already seen that
\[ supp(x) \subseteq supp(z) .\]
Thus,
\[ \omega_P (x) \leq \omega_P (z) \]
and
\[ \omega_P (x^v) \leq \omega_P (z^v) .\]
But then,
\[ \omega^\vee_P (x,x^v) \leq \omega^\vee_P (z,z^v) ,\]
from which the first condition follows.
\end{proof}

This lemma reduces our search space for a vector radius of $v \in \mathbb{F}_q^n$ to $2^{| supp(v) |}$ elements. The third condition of the lemma shows that the cardinality of the field is irrelevant to our problem, that is, the case $q=2$ is equivalent to the general case. The second condition shows that the structure of the poset outside of the ideal generated by the support of $v$ is also irrelevant. Thus, the packing radius of a vector is a property of the ideal generated by its support.

With this in mind, the following definitions are quite natural.

\begin{mydef}
Let $P$ be a poset and $A \subseteq P$. We define the \textbf{$P$-weight} of $A$ as 
\[ \omega_P (A) = | \langle A \rangle | ,\]
the cardinality of the Ideal generated by $A$.
\end{mydef}

\begin{mydef}
Let $P$ be a poset, $v \in \mathbb{F}_q^n$, and $A \subseteq supp(v)$. We define the \textbf{complement of $A$ with respect to $v$} as
\[ A^v = supp(v) - A .\]
\end{mydef}

Note that $(A, A^v)$ is a partition of $supp(v)$.

We are now ready to show that the problem of determining the radius vector of a vector is essentially a partitioning problem.

\begin{mylem}
Let $P$ be a poset and $v \in \mathbb{F}_q^n$. Then, 
\[ \min_{x \in \mathbb{F}_q^n} \{ \omega^\vee_P (x,x^v) \} = \min_{A \subseteq supp(v)} \{ \omega^\vee_P (A,A^v) \} \]
\end{mylem}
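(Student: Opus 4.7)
The plan is to prove the equality by establishing both inequalities via a natural bijection between subsets $A \subseteq \operatorname{supp}(v)$ and vectors $x \in \mathbb{F}_q^n$ satisfying the structural conditions of the previous lemma (the one characterizing a ``canonical'' radius vector).

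For the inequality $\min_x \omega^\vee_P(x,x^v) \geq \min_A \omega^\vee_P(A,A^v)$, I would invoke the previous lemma to pick a radius vector $x \in \mathbb{F}_q^n$ of $v$ satisfying $\operatorname{supp}(x) \subseteq \operatorname{supp}(v)$ and $x_i \in \{0, v_i\}$ for all $i$. Setting $A := \operatorname{supp}(x) \subseteq \operatorname{supp}(v)$, it follows from the coordinatewise description that $\operatorname{supp}(x^v) = \{i \in \operatorname{supp}(v) : x_i = 0\} = \operatorname{supp}(v) - A = A^v$. Hence $\omega_P(x) = |\langle \operatorname{supp}(x)\rangle| = \omega_P(A)$ and $\omega_P(x^v) = \omega_P(A^v)$, so $\omega^\vee_P(x,x^v) = \omega^\vee_P(A,A^v)$, and the left-hand side is at least the right-hand side.

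For the reverse inequality, given any $A \subseteq \operatorname{supp}(v)$, I would define $x \in \mathbb{F}_q^n$ coordinatewise by $x_i = v_i$ if $i \in A$ and $x_i = 0$ otherwise. Then $\operatorname{supp}(x) = A$ and $\operatorname{supp}(x^v) = \operatorname{supp}(v) - A = A^v$, so $\omega^\vee_P(x,x^v) = \omega^\vee_P(A,A^v)$. Taking the minimum over $A$ on the right and over $x$ on the left yields the remaining inequality.

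There is no real obstacle here: the content of the lemma is essentially that the earlier structural reduction (the previous lemma) translates verbatim into the language of subsets of $\operatorname{supp}(v)$, because on those coordinates everything is governed by whether $x_i$ is zero or not. The only point worth being careful about is verifying that $\operatorname{supp}(x^v) = A^v$ (rather than the a priori larger set $\operatorname{supp}(v) - A \cup \{i \notin \operatorname{supp}(v) : \ldots\}$), which follows immediately from $x_i = 0$ for $i \notin \operatorname{supp}(v)$ and therefore $x^v_i = v_i - x_i = 0$ as well.
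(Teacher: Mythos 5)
Your proof is correct and follows essentially the same route as the paper: both rely on the correspondence $A \mapsto x$ with $x_i = v_i[i \in A]$ (the paper's map $f_v$), verify that it preserves weights and complements, and invoke the previous lemma to ensure a radius vector lies in its image. The only difference is presentational — you split the argument into two inequalities, while the paper phrases it as restricting the minimum to the image of $f_v$.
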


\begin{proof}
Define 
\[f_v : 2^{supp(v)} \mapsto \mathbb{F}_q^n\]
as the function that takes $A \subseteq supp(v)$ as input and outputs $f_v (A) = x$ such that
\[ x_i = v_i [ i \in A ] .\]
This function is injective and weight preserving, i.e. $\omega_P (A) = \omega_P (f_v (A))$. It also respects complementation, as, since $(A, A^v)$ is a partition of $supp(v)$, 
\[ f_v (A) + f_v (A^v) = v_i [ i \in A ] + v_i [i \in A^v] = v_i ,\]
and therefore,
\[ f_v (A^v) = f_v (A)^v . \]
Thus,
\[ \min_{A \subseteq supp(v)} \{ \omega^\vee_P (A,A^v) \} = \min_{x \in f_v (2^{supp(v)})	} \{ \omega^\vee_P (x,x^v) \} .\]
By Lemma 2, there exists a radius vector of $v$ in $f_v ( 2^{supp(v)} )$, and the result follows.
\end{proof}

We define a radius set analogously to the concept of a radius vector.

\begin{mydef}
Let $P$ be a poset and $v \in \mathbb{F}_q^n$. We say $X \subseteq supp(v)$ is a \textbf{radius set} of $v$ if
\[ \omega^\vee_P (X,X^v)  = \min_{A \subseteq supp(v)} \{ \omega^\vee_P (A,A^v) \} .\]
\end{mydef}

We will now reduce our search space even more. For this we will need the following definition:

\begin{mydef}
Let $P$ be a poset and $A \subseteq P$. We denote the set of maximal elements of $A$ by $M_A$.
\end{mydef}

The weight of a set is determined by its maximal elements, i.e. $\omega_P (A) = \omega_P(M_A)$ since the ideal generated by both are the same. It is quite clear then that the maximal elements must be important in determining the radius set.

\begin{mylem}
Let $P$ be a poset and $v \in \mathbb{F}_q^n$. Then, there exists a radius set $X$ of $v$ such that $(M_X , M_{X^v})$ is a partition of $M_{supp(v)}$.
\end{mylem}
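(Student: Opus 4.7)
The plan is to start from an arbitrary radius set $Y$ of $v$ and construct from it a radius set $X$ with the desired partition property. Define $A = Y \cap M_{supp(v)}$ and $B = Y^v \cap M_{supp(v)}$. Since $Y$ and $Y^v$ partition $supp(v)$ and $M_{supp(v)} \subseteq supp(v)$, the pair $(A,B)$ is a partition of $M_{supp(v)}$. I then set
\[ X = supp(v) \cap \langle A \rangle , \qquad X^v = supp(v) \setminus \langle A \rangle . \]

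The first thing to verify is that $M_X = A$ and $M_{X^v} = B$, which immediately yields the desired partition. The inclusion $A \subseteq M_X$ is clear because the elements of $A$ are already maximal in the larger set $supp(v) \supseteq X$, and $M_X \subseteq A$ follows because every element of $X \setminus A$ lies strictly below some $a \in A \subseteq X$ and hence cannot be maximal in $X$. For the complement side, the key observation is that the maximal elements of a poset form an antichain, which forces $B \cap \langle A \rangle = \emptyset$ and so $B \subseteq X^v$; moreover, any $z \in X^v$ that fails to be maximal in $supp(v)$ must lie strictly below some $m \in M_{supp(v)}$ which cannot belong to $A$ (else $z \in \langle A \rangle \cap supp(v) = X$, a contradiction), so $m \in B \subseteq X^v$ witnesses that $z \notin M_{X^v}$.

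It remains to show that $X$ is actually a radius set. Since the elements of $A$ are maximal in $supp(v)$, they are maximal in every subset of $supp(v)$ containing them; in particular $A \subseteq M_Y$, so
\[ \omega_P(X) = |\langle A \rangle| \leq |\langle M_Y \rangle| = \omega_P(Y) , \]
and symmetrically $\omega_P(X^v) \leq \omega_P(Y^v)$. Therefore $\omega^\vee_P(X,X^v) \leq \omega^\vee_P(Y,Y^v)$, and since $Y$ already attains the minimum, equality holds and $X$ is also a radius set. The main delicate point is the identification $M_{X^v} = B$: one must use both the antichain property of $M_{supp(v)}$ (to prevent $A$ and $B$ from interfering in the downsets beneath them) and the downward-saturated definition of $X$ (which guarantees that every non-maximal element of $X^v$ is dominated inside $X^v$ by some element of $B$).
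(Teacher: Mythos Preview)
Your proof is correct and is essentially the same argument as the paper's. The paper starts from a radius set $Z$ and sets $X = \langle M_Z \cap M_{supp(v)} \rangle \cap supp(v)$; since every element of $Y \cap M_{supp(v)}$ is automatically maximal in $Y$, your $A = Y \cap M_{supp(v)}$ coincides with $M_Y \cap M_{supp(v)}$, so your $X = supp(v) \cap \langle A \rangle$ is exactly the paper's construction, and the verifications that $M_X = A$, $M_{X^v} = B$, and that $X$ is again a radius set follow the same lines (the paper just checks the radius-set inequality before identifying $M_{X^v}$, whereas you do it afterwards).
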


\begin{proof}
Let $Z$ be a radius set of $v$. Define
\[ X = \langle M_Z \cap M_{supp(v)} \rangle \cap supp(v) .\]
First, we show that $X$ is a radius set of $v$. By definition,
\[ M_X = M_Z \cap M_{supp(v)} ,\]
and therefore,
\[ \langle X \rangle = \langle M_Z \cap M_{supp(v)} \rangle \subseteq \langle M_Z \rangle .\]
Thus,
\[ \omega_P (X) \leq \omega_P (Z) .\]

Now, let $i \in X^v$. By definition, $i \in supp(v)$ and $i \notin X$, hence,
\[ i \notin \langle M_Z \cap M_{supp(v)} \rangle .\]
Since $i \in supp(v)$, there exists $j \in M_{supp(v)}$ such that $i \leq j$. If $j \in M_Z$, then
\[ i \in \langle M_Z \cap M_{supp(v)} \rangle ,\]
a contradiction. Thus, $j \in M_{Z^v}$, and therefore, $i \in \langle M_{Z^v}$. Hence, $X^v \subseteq \langle M_{Z^v} \rangle$ and consequently
\[ \omega_P (X^v) \leq \omega_P (Z^v) .\]

Using both inequalities we have proven that $X$ is a radius set of $v$. Now, we need to prove that $(M_X , M_{X^v})$ is a partition of $M_{supp(v)}$. We do this by proving that
\[ M_{X^v} = M_{supp(v)} - M_X .\]

Let $i \in M_{X^v}$. Then, certainly, $i \notin M_X$. Suppose that $i \notin M_{supp(v)}$, then there exists $j \in M_{supp(v)}$ such that $i < j$. Since $j$ cannot be in $X^v$ we must have $j \in X$, but that would imply $i \in M_X$, a contradiction. Thus, $i \in M_{supp(v)}$, and therefore,
\[ M_{X^v} \subseteq M_{supp(v)} - M_X .\]

Now, let $i \in M_{supp(v)} - M_X$. Then, $i \in M_{supp(v)}$ and $i \notin M_X$. Since $i \notin X$ it is certainly in $X^v$, but it is also in $M_{supp(v)}$ and is therefore in $M_{X^v}$. Hence,
\[ M_{X^v} \supseteq M_{supp(v)} - M_X ,\]
and the lemma is proven.
\end{proof}

Thus, the problem of finding the packing radius of a vector has been transformed into a partitioning problem over the set of maximal elements of its ideals.

\begin{myteo}
Let $P$ be a poset and $v \in \mathbb{F}_q^n$. Then,
\[ R_{d_P} (v) = \min_{A,B \subseteq M_{supp(v)}} \{ max \{ \omega_P (A) , \omega_P (B) \} \} - 1 ,\]
where $(A,B)$ is a partition of $M_{supp(v)}$.
\end{myteo}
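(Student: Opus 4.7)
The plan is to combine Theorem 1, Lemma 3, and Lemma 4 to rewrite the packing radius as a minimum over partitions of the maximal elements of $\mathrm{supp}(v)$. Chaining Theorem 1 with Lemma 3 gives immediately
\[ R_{d_P}(v) + 1 \;=\; \min_{X \subseteq \mathrm{supp}(v)} \omega^\vee_P(X, X^v), \]
so the task reduces to showing that the right-hand side equals
\[ \min_{(A,B)} \max\{\omega_P(A), \omega_P(B)\}, \]
where $(A,B)$ ranges over partitions of $M_{\mathrm{supp}(v)}$. I would prove this by two inequalities.

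For the inequality $\geq$, I would apply Lemma 4 to obtain a radius set $X$ of $v$ with the property that $(M_X, M_{X^v})$ is a partition of $M_{\mathrm{supp}(v)}$. Since $\omega_P(A) = |\langle A \rangle| = |\langle M_A\rangle| = \omega_P(M_A)$ for any set $A$, the pair $(A,B) = (M_X, M_{X^v})$ is an admissible partition with $\max\{\omega_P(A), \omega_P(B)\} = \omega^\vee_P(X, X^v)$, which by choice of $X$ is the minimum on the left.

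For the inequality $\leq$, I would, given an arbitrary partition $(A,B)$ of $M_{\mathrm{supp}(v)}$, construct an explicit candidate $X \subseteq \mathrm{supp}(v)$, namely $X = \langle A \rangle \cap \mathrm{supp}(v)$, so that $X^v = \mathrm{supp}(v) \setminus X$. I would then verify that $\omega_P(X) = \omega_P(A)$ and $\omega_P(X^v) = \omega_P(B)$. The first equality follows from $A \subseteq X \subseteq \langle A \rangle$, which forces $\langle X \rangle = \langle A \rangle$. For the second, one observes that every $j \in X^v$ lies below some maximal element $m$ of $\mathrm{supp}(v)$; if $m \in A$, then $j \in X$, a contradiction, so $m \in B$ and $j \in \langle B \rangle$, giving $X^v \subseteq \langle B \rangle$; the reverse containment $B \subseteq X^v$ is immediate from the partition property, which yields $\langle X^v \rangle = \langle B \rangle$.

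The main obstacle is the $\leq$ direction, and in particular the step identifying $M_{\mathrm{supp}(v)} \cap X$ with $A$: one has to use that elements of $A$ are maximal in $\mathrm{supp}(v)$ to ensure that no element of $X \setminus A$ can be maximal in $\mathrm{supp}(v)$, because any maximal element of $\mathrm{supp}(v)$ lying in $\langle A \rangle$ must in fact belong to $A$ itself. Once this is established, combining the two inequalities produces the claimed formula for $R_{d_P}(v)$.
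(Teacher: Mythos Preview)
Your proposal is correct and follows the same route the paper sketches: chain Theorem~1, Lemma~3, and Lemma~4. The paper's own proof is a single sentence invoking exactly these three results; your $\leq$ direction (constructing $X = \langle A\rangle \cap \mathrm{supp}(v)$ from a partition $(A,B)$ and verifying $\langle X\rangle = \langle A\rangle$, $\langle X^v\rangle = \langle B\rangle$) makes explicit the correspondence between partitions of $M_{\mathrm{supp}(v)}$ and admissible subsets that the paper leaves to the reader.
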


\begin{proof}
Apply Lemma 3 to Theorem 1, and then using the definition of radius set, apply Lemma 4.
\end{proof}

The size of our search spaced has been reduced to $2^{ | M_{supp(v)} | }$ elements.

We can make some changes on the notation in order to emphasize on the partitioning problem.

\begin{mydef}
Let $P$ be a poset and $M_P$ be the set of its maximal elements. We define the \textbf{packing radius of the poset} $P$ as
\[ R(P) = \min_{A,B \subseteq M_P} \{ max \{ | \langle A \rangle | ) , | \langle B \rangle | \} \} - 1 ,\]
where $(A,B)$ is a partition of $M_P$.
\end{mydef}

A partition that minimizes the expression above is called an \textbf{optimum partition}.

As a direct corollary to Theorem 2 we have:

\begin{mycor}
Let $P$ be a poset and $v \in \mathbb{F}_q^n$. Then,
\[ R_{d_p} (v) = R(supp(v)) .\]
\end{mycor}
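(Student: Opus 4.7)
The plan is to recognize this corollary as essentially a change of notation: the right-hand side of Theorem 2 already has exactly the shape of the newly-defined quantity $R(\cdot)$, so the proof amounts to matching symbols rather than doing fresh work.

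First I would write out $R(supp(v))$ directly from the definition, obtaining
\[ R(supp(v)) = \min_{(A,B)}\bigl\{\max\bigl\{|\langle A \rangle|,|\langle B \rangle|\bigr\}\bigr\} - 1, \]
where $(A,B)$ ranges over partitions of $M_{supp(v)}$. Then I would invoke the definition of the $P$-weight of a set, $\omega_P(A) = |\langle A \rangle|$, to rewrite this expression and compare term by term with the formula for $R_{d_P}(v)$ delivered by Theorem 2; the two agree symbol for symbol.

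The one thing I would pause on is the interpretation of the ideal $\langle A \rangle$ when we treat $supp(v)$ itself as a poset: the intended reading is that the ideal is taken in the ambient poset $P$. This reading is consistent with the abstract definition of $R$ because $\langle supp(v) \rangle$ is a downward-closed subset of $P$ containing every $A \subseteq M_{supp(v)}$, so the ideal generated by $A$ in the sub-poset carried by $\langle supp(v) \rangle$ coincides with the ideal generated in $P$, and thus with $\omega_P(A)$. Once this identification is flagged, the corollary follows immediately, and I do not anticipate any genuine obstacle: the substance has already been done inside Theorem 2, and what remains is simply the notational step of decoupling the partitioning problem from the vector $v$.
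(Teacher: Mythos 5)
Your proposal is correct and matches the paper's treatment: the paper offers no separate argument, presenting the identity as an immediate consequence of Theorem 2 once the definition of $R(\cdot)$ is unfolded, which is exactly the symbol-matching you carry out. Your remark that the ideals $\langle A \rangle$ must be read as taken in the ambient poset $P$ (equivalently in $\langle supp(v) \rangle$, which is downward closed) is a worthwhile clarification that the paper leaves implicit, and it is indeed needed for the literal formula $R(supp(v))$ to agree with $\omega_P$.
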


The problem of finding the packing radius of a vector is then equivalent to the problem of finding the packing radius of a poset, which we will call the \textbf{poset partition problem}.  This problem is a generalization of the famous NP-hard problem known as ``the partition problem'', that will be introduced in the next section.

\section{The Partition Problem}

The \textbf{partition problem}, which we will also call the \textbf{classical partition problem}, is defined as follows: Given a finite list $S$ of positive integers, find a partition $(S_1 , S_2)$ of $S$ that minimizes
\[ \max \left\{ \sum_{x \in S_1} x , \sum_{y \in S_2} y \right\}. \]
This is equivalent to minimizing the \textbf{discrepancy}
\[ \Delta (S_1 , S_2) = \left| \sum_{x \in S_1} x - \sum_{y \in S_2} y \right| .\]

In principle, the lowest possible value for the discrepancy is $0$ or $1$, depending on the parity of sum of the elements of $S$. A partition that attains this value is called a \textbf{perfect partition}. The possible existence of these perfect partitions is very important for the algorithms used to solve the problem, for if one is found it determines the answer completely.

This problem is of great importance both from the practical and theoretical point of view. In \cite{karp72}, Karp proves that it is NP-hard. This being the case, unless P=NP, there does not exist an algorithm that solves all instances of the problem in polynomial time.

\subsection{The Karmarkar-Karp Heuristic}

The best heuristic known for this problem is the  \textbf{Karmarkar-Karp(KK)} Heuristic, also known as the \textbf{differencing method}, first introduced in \cite{karm82}. The method involves using the \textbf{differencing operation}: select two elements $x_i$ and $x_j$ from the list being partitioned and replace them by the element $| x_i - x_j |$. Doing this is equivalent to making the decision that they will go into different subsets. After applying this operation $n-1$ times a partition will have been made and its discrepancy will be the value of the single element left on the list.

Depending on which criterion is used to choose the two elements in each step to apply the differencing operation, many partitions can be obtained. For the classical partition problem, the best criterion known is the \textbf{largest differencing method(LDM)}, which chooses the two biggest elements.

\begin{myexe}
Let $(8,7,6,5,4)$ be the list being partitioned. The KK heuristic using the LDM criterion will have the following instances: $(8,7,6,5,4)$, $(6,5,4,1)$, $(4,1,1)$, $(3,1)$, $(2)$, giving a discrepancy of $2$ pertaining to the partition $(8,6),(7,5,4)$. In this case the heuristic does not find the optimal partition $(8,7),(6,5,4)$.
\end{myexe}

\subsection{Complete Karmarkar-Karp}

In \cite{korf98}, Korf shows how to extend an heuristic using the differencing operation into a \textbf{complete anytime algorithm}, i.e. an algorithm that finds better and better solutions the longer it runs, until it finally finds the optimal one. This algorithm is known as the \textbf{Complete Karmarkar-Karp(CKK)} algorithm.

At each instance the KK heuristic commits on placing two elements in different subsets, by replacing them with their difference. The only other option would be to commit to place them in the same set, replacing them by their sum. This results in a binary tree where each node represents a possible instance. A left branch on a node will lead to a node where the two chosen elements were replaced with their difference and a right branch to a node where they were replaced with their sum. If the list has $n$ elements, the whole tree will have $2^{n-1}$ terminal nodes corresponding to all the possible partitions.

\begin{figure}[htb]
\centering
\includegraphics[scale=0.3]{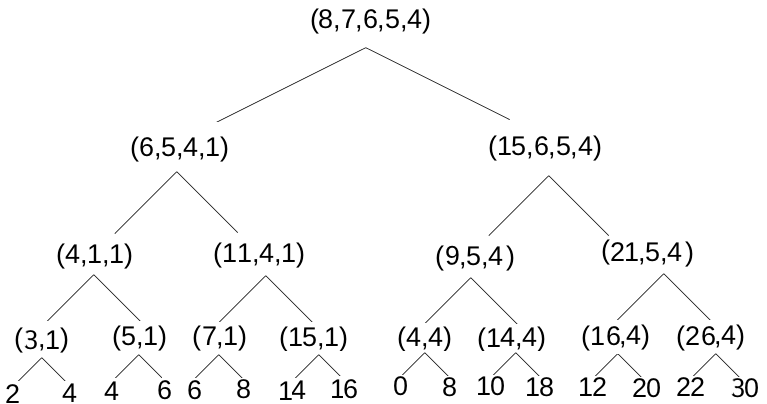}
\caption{Binary tree for partitioning $(8,7,6,5,4)$ using LDM.}
\end{figure}

If no perfect partition exists, the whole tree must be searched. In this (worst case) scenario, the running time of CKK is $O(2^n)$. If, otherwise, there is a perfect partition, then upon finding it, the search can stop. This being the case the order in which the terminal nodes are searched is important and the best results come from using the LDM criterion and giving preference to left branches. For a discussion on good ways to search the tree, see \cite{korf96}.

Another important aspect of searching the binary tree is pruning, i.e. sometimes it is not necessary to look at all the terminal nodes. For example, if the largest element of an instance is bigger than the sum of the remaining elements, then the best partition for that instance is to put the largest element in a subset and all the rest in the other one. Other criteria for pruning are:
\begin{itemize}
\item In a list with three elements the optimal partition is to put the biggest one in a subset and the other two in another.
\item In a list with four elements the KK heuristic finds the optimal partition.
\item In a list with five elements, if the KK heuristic does not find the optimal partition then the optimal partition is to put the two biggest elements in the same subset and the other ones in another.
\end{itemize}

\begin{figure}[htb]
\centering
\includegraphics[scale=0.4]{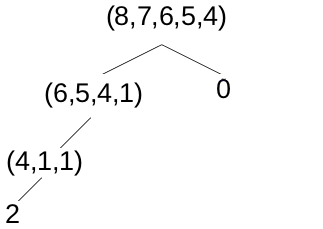}
\caption{Pruned tree for partitioning $(8,7,6,5,4)$ using LDM.}
\end{figure}

\section{The Packing Radius of a Poset}

In this section we will discuss how to determine the packing radius of a poset. First we will show that when the ideal generated by the maximal elements are pairwise disjoint, which we will call the \textbf{disjoint ideals case}, the poset partition problem is equivalent to the classic partition problem. Next we will approach the more general case of any poset, and after this we will generalize the differencing operation so that we can use it in the poset partition problem.

\subsection{The Disjoint Ideals Case}

We begin by showing that when the ideals generated by the maximal elements of a poset are pairwise disjoint, the poset partition problem is equivalent to the classic partition problem.

Note that this condition is equivalent to stating that each connected component of the Hasse diagram of the poset has a unique maximal element. This instance includes the important class of NRT-Posets, that is explored, for example, in \cite{park10}.

\begin{myteo}
Let $P$ be a poset with maximal elements $M_P = \{ x_1 , x_2 , \ldots , x_m \}$ such that 
\[ <x_i> \cap <x_j> = \emptyset, \forall \ i \neq j .\]
Denote $ \omega_P (x_i) = l_i $. Then, finding the packing radius of $P$ is equivalent to solving the classic partition problem for the list $S=( l_1, l_2, \ldots , l_m)$.
\end{myteo}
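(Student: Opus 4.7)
The plan is to argue directly from the definition of $R(P)$ given just above: we have
\[ R(P) = \min_{(A,B)} \max\{|\langle A \rangle|, |\langle B \rangle|\} - 1,\]
where $(A,B)$ ranges over partitions of $M_P$. So the entire content of the theorem is the claim that, under the disjointness hypothesis, the quantity $\max\{|\langle A \rangle|,|\langle B \rangle|\}$ coincides with the classical partition problem objective $\max\{\sum_{x_i \in A} l_i,\sum_{x_j \in B} l_j\}$ applied to the list $S = (l_1,\dots,l_m)$.

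First I would establish the key additivity lemma: for any $A \subseteq M_P$,
\[ |\langle A \rangle| = \sum_{x_i \in A} l_i. \]
Because each element of $\langle A \rangle$ lies below some $x_i \in A$, we can write $\langle A \rangle = \bigcup_{x_i \in A} \langle x_i \rangle$. The disjointness hypothesis $\langle x_i \rangle \cap \langle x_j \rangle = \emptyset$ for $i \neq j$ makes this a disjoint union, so the cardinalities add up to $\sum_{x_i \in A} |\langle x_i \rangle| = \sum_{x_i \in A} l_i$.

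Next I would identify the two partitioning problems. Subsets $A \subseteq M_P$ are in obvious bijection with sublists of $S$ via $A \leftrightarrow (l_i : x_i \in A)$, and under this bijection a partition $(A,B)$ of $M_P$ corresponds to a partition $(S_1,S_2)$ of $S$. Applying the additivity lemma to both sides of the partition shows
\[ \max\{|\langle A \rangle|,|\langle B \rangle|\} = \max\Bigl\{\sum_{x \in S_1} x,\sum_{y \in S_2} y\Bigr\}, \]
so minimizing the left-hand side over partitions of $M_P$ is the same minimization as in the classical partition problem on $S$. Consequently $R(P)+1$ equals the optimum of the classical partition problem on $S$, and an optimum partition of $M_P$ corresponds to an optimum partition of $S$ under the bijection, proving the equivalence.

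The argument is essentially a bookkeeping reduction, so I do not expect a serious obstacle; the only substantive point is the additivity of $|\langle \cdot \rangle|$ under disjoint ideals, and that is immediate from the hypothesis. The sharpness of the hypothesis is worth highlighting afterward: as soon as two maximal ideals overlap, $|\langle A \rangle|$ becomes strictly smaller than $\sum_{x_i \in A} l_i$ in general, and this additive structure (which is exactly what the classical partition problem exploits) is lost, motivating the need for the more general poset partition framework developed in the remainder of the paper.
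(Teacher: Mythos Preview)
Your proposal is correct and follows essentially the same route as the paper's own proof: set up the bijection $A \leftrightarrow S_1 = (l_i : x_i \in A)$ between partitions of $M_P$ and partitions of $S$, then use the disjointness hypothesis to get the additivity $\omega_P(A) = \sum_{x_i \in A} l_i$, which identifies the two objective functions. The paper's argument is exactly this, written out with Iverson brackets; your closing remark about what fails when ideals overlap is a nice addition that the paper develops immediately afterward via the discordancy.
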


\begin{proof}
Let $(A,B)$ be a partition of $M_P$. Define 
\[ S_1 = ( l_i : x_i \in A) \] 
\[ S_2 = ( l_i : x_i \in B).\]
It is clear that $(S_1, S_2)$ is a partition of $S$. Since
\[ <x_i> \cap <x_j> = \emptyset , \forall \ i \neq j ,\]
we have

\begin{align*}
\omega_P (A) &= \sum_{x \in A} \omega_P (x) \\
&= \sum_{x} \omega_P (x) [ x \in A ] \\
&= \sum_{i} \omega_P (x_i) [x_i \in A] \\
&= \sum_{i} l_i [ l_i \in S_1 ] \\
&= \sum_{l \in S_1} l. \\
\end{align*}

Analogously, 
\[ \omega_P (B) = \sum_{l \in S_2} l ,\]
and therefore, finding a partition $(A,B)$ of $M_P$ that minimizes the maximum between $\omega_P (A)$ and $\omega_P (B)$ is equivalent to finding a partition $(S_1, S_2)$ of $S$ that minimizes the maximum between $\sum_{l \in S_1} l$ e $\sum_{l \in S_2} l$.
\end{proof}

So, to find the packing radius of a poset satisfying the disjoint ideals case we can use all the methods known for the classic partition problem. Given the importance of the discrepancy for the classic partition problem it will be useful to define the discrepancy for the poset case.

\begin{mydef}
Let $P$ be a poset and $(A,B)$ be a partition of $M_P$, the maximal elements of $P$. We define the \textbf{discrepancy} between $A$ and $B$ as
\[ \Delta (A,B) = | \omega_P (A) - \omega_P (B)|, \]
and the \textbf{minimum discrepancy} of $P$ as
\[ \Delta^* (P) = \min_{X \sqcup Y = M_P} \Delta(X,Y) ,\] 
where $X \sqcup Y$ indicates a disjoint union, i.e. $(X,Y)$ is a partition of $M_P$.
\end{mydef}

In the last section we saw that in the classical partition problem what we want to minimize is the discrepancy. For the disjoint ideal case this will then also be true, and we can write the packing radius of a poset, in this case, as the function of its minimum discrepancy.

\begin{myteo}
Let $P$ be a poset of size $n$ with maximal elements $M_P = \{ x_1 , x_2 , \ldots , x_m \}$ such that
\[ <x_i> \cap <x_j> = \emptyset \hspace{0,25 in} \forall \ i \neq j .\]
Then, the packing radius of $P$ is
\[ R(P) = \dfrac{n}{2} + \dfrac{\Delta^* (P)}{2} - 1 .\]
\end{myteo}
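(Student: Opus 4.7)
The plan is to reduce the maximum-of-two-weights expression in the definition of $R(P)$ to a sum-plus-discrepancy expression, exploiting the fact that disjoint ideals make the weights of $A$ and $B$ add up to a constant.

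First I would observe that because every element of $P$ lies below some maximal element, we have $P = \bigcup_{i=1}^m \langle x_i \rangle$, and the hypothesis $\langle x_i\rangle \cap \langle x_j\rangle = \emptyset$ for $i \neq j$ turns this union into a disjoint union. Hence $n = |P| = \sum_{i=1}^m l_i$. The same disjointness argument, already used in the proof of Theorem 3, shows that for any partition $(A,B)$ of $M_P$ one has $\omega_P(A) = \sum_{x_i \in A} l_i$ and $\omega_P(B) = \sum_{x_i \in B} l_i$, so
\[ \omega_P(A) + \omega_P(B) = \sum_{i=1}^m l_i = n. \]

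The key identity is then the elementary
\[ \max\{a,b\} = \frac{a+b}{2} + \frac{|a-b|}{2}, \]
applied with $a = \omega_P(A)$ and $b = \omega_P(B)$. This gives
\[ \max\{\omega_P(A),\omega_P(B)\} = \frac{n}{2} + \frac{\Delta(A,B)}{2}. \]
Taking the minimum over all partitions $(A,B)$ of $M_P$ on both sides, and noting that the first term is a constant, yields
\[ \min_{A \sqcup B = M_P} \max\{\omega_P(A),\omega_P(B)\} = \frac{n}{2} + \frac{\Delta^*(P)}{2}. \]

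Finally I would substitute this into the definition $R(P) = \min_{A,B \subseteq M_P}\{\max\{\omega_P(A),\omega_P(B)\}\} - 1$ to obtain the desired formula. There is really no hard step here: the whole argument rests on the single observation that disjointness of the ideals makes $\omega_P(A) + \omega_P(B)$ a constant equal to $n$, after which the claim is algebraic. The only minor point to be careful about is that the partitions allowed in the definition of $R(P)$ and in the definition of $\Delta^*(P)$ are the same family (ordered partitions of $M_P$ into two, possibly empty, blocks), so that the two minima are taken over the same set and can be matched term by term.
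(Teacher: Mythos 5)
Your proposal is correct and follows essentially the same route as the paper: the paper also derives $\omega^\vee_P(A,B) = \frac{n}{2} + \frac{\Delta(A,B)}{2}$ from the two identities $\Delta(A,B) = \max - \min$ and $n = \omega_P(A)+\omega_P(B) = \max + \min$ (the latter being exactly your disjointness observation), and then minimizes over partitions using Theorem 2. Your explicit remark that $P = \bigcup_i \langle x_i\rangle$ is a harmless, slightly more careful justification of the step $n = \omega_P(A)+\omega_P(B)$ that the paper states more tersely.
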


\begin{proof}
Let $(A,B)$ be a partition of $M_P$. Then, the following equations are satisfied:
\[ \Delta (A,B) = \omega^\vee_P (A,B) - \min \{ \omega_P (A) , \omega_P (B) \} \]
\[ n = \omega^\vee_P (A,B^v) + \min \{ \omega_P (A) , \omega_P (B) \} .\]

The first one is true by definition. The second one follows from the fact that there is no intersection between the ideals generated by $A$ and $B$, and therefore,
\[n = \omega_P (A) + \omega_P (B) .\]

Taking the sum between both equations and dividing by two we have 
\[ \omega^\vee_P (A,B) = \dfrac{n}{2} + \dfrac{\Delta (A,B)}{2} .\]

But then, by Theorem 2,
\begin{align*}
R(P) &= \min_{A \sqcup B = M_P}  \omega^\vee_P (A,B) - 1 \\
&= \min_{A\sqcup B = M_P} \dfrac{n}{2} + \dfrac{\Delta (A,B)}{2} - 1 \\
&= \dfrac{n}{2} + \dfrac{\Delta^* (P)}{2} - 1 \\
\end{align*}
\end{proof}

\begin{myexe}
Let $H$ be the Hamming poset (anti-chain) of size $n$. The partition problem associated with it is to partition the list with $n$ ones. The minimum discrepancy is then $0$ or $1$ depending on the parity of $n$, i.e.
\[ \Delta^* (H) = [ \text{$n$ is odd}] ,\]
which leads to the classical expression,
\[R(H) = \dfrac{n}{2} + \dfrac{ [ \text{$n$ is odd}] }{2} - 1.\]
\end{myexe}

\begin{myexe}
Let $P$ be a chain of size $n$. The partition problem associated with it is to partition the list $(n)$. The minimum discrepancy is then
\[\Delta^* (P) = n,\]
and therefore,
\begin{align*}
R(P) &= \dfrac{n}{2} + \dfrac{n}{2} - 1 \\
&= n-1, \\
\end{align*}
as found in \cite{fire10}.
\end{myexe}

\begin{figure}[htb]
\centering
\includegraphics[scale=0.3]{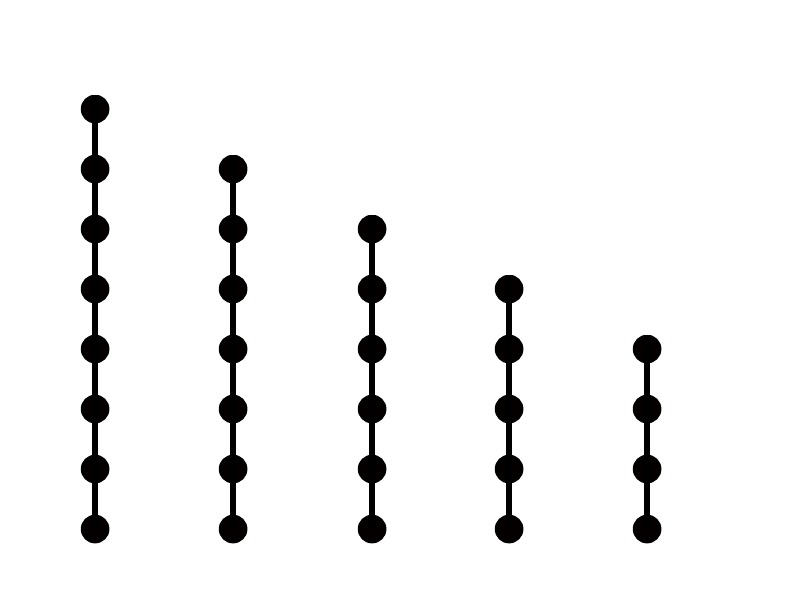}
\caption{Finding the packing radius of this poset, denoted by $P$, is equivalent to solving the partition problem for the list $(8,7,6,5,4)$. We already saw that for this partition $\Delta^* (8,7,6,5,4) = 0$. Thus, since $n = 30$ and $\Delta^* (P) = 0$, the packing radius of the poset is $R(P) = 14$.}
\end{figure}

\subsection{The General Case}

In the general case, finding the packing radius of a poset will not always be equivalent to minimizing the discrepancy as the following figure shows.

\newpage

\begin{figure}[htb]
\centering
\includegraphics[scale=0.3]{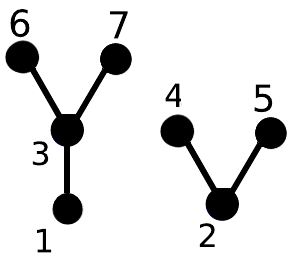}
\caption{In this poset, the partition $ \{ 6,4 \} \{ 7,5 \} $ has discrepancy $\Delta ( \{ 6,4 \}, \{ 7,5 \} ) = 0$ and maximum weight equal to $5$. But the optimal partition is $ \{ 6,7 \} \{ 4,5 \} $ with maximum weight $4$, even though the discrepancy is $1$. }
\end{figure}

We will thus have to consider the possibility of intersections between the ideal generated by maximal elements.

\begin{mydef}
Let $P$ be a poset and $(A,B)$ a partition of $M_P$, the maximal elements of $P$. We define the \textbf{discordancy} between $A$ and $B$ as
\[ \Lambda (A,B) = \Delta (A,B) + | \langle A \rangle \cap \langle B \rangle |  ,\]
and the \textbf{minimum discordancy} of $P$ as
\[ \Lambda^* (P) = \min_{X \sqcup Y = M_P} \Lambda (X,Y) .\] 
\end{mydef}

Note that the discordancy coincides with the discrepancy in the disjoint ideals case.

We now show that the discordancy is what we need to minimize.

\begin{myteo}
Let $P$ be a poset of size $n$. Then, the packing radius of $P$ is
\[ R(P) = \dfrac{n}{2} + \dfrac{\Lambda^* (P)}{2} - 1 .\]
\end{myteo}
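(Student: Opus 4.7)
The plan is to generalize the argument used in Theorem 4 by replacing the identity $\omega_P(A) + \omega_P(B) = n$, which held only under the disjoint ideals hypothesis, with its inclusion-exclusion counterpart. First I would observe that for every partition $(A,B)$ of $M_P$, we have $\langle A \rangle \cup \langle B \rangle = \langle M_P \rangle$, and since every element of $P$ lies below some maximal element, $\langle M_P \rangle = P$ and so $|\langle A \rangle \cup \langle B \rangle| = n$. Applying inclusion-exclusion then yields
\[ n = \omega_P(A) + \omega_P(B) - |\langle A \rangle \cap \langle B \rangle|, \]
which is the replacement needed for the corresponding step in the disjoint case.

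Next I would combine this with the elementary identity $\max\{a,b\} = \tfrac{1}{2}(a+b+|a-b|)$ applied to $a=\omega_P(A)$ and $b=\omega_P(B)$. Substituting the inclusion-exclusion relation into $a+b$ and recognizing $|a-b|$ as $\Delta(A,B)$, I get
\[ \omega^\vee_P(A,B) = \frac{n + |\langle A \rangle \cap \langle B \rangle| + \Delta(A,B)}{2} = \frac{n + \Lambda(A,B)}{2}, \]
by the very definition of $\Lambda(A,B)$. This is the key formula; the two auxiliary equations used in the proof of Theorem 4 are essentially collapsed into this single identity, with $\Lambda$ playing the role that $\Delta$ played there.

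To finish, I would invoke Theorem 2, which expresses $R(P)$ as the minimum of $\omega^\vee_P(A,B) - 1$ over partitions $(A,B)$ of $M_P$. Since $n$ is a constant independent of the partition, the minimum of $\tfrac{n+\Lambda(A,B)}{2}$ is achieved exactly when $\Lambda(A,B)$ is minimized, giving
\[ R(P) = \frac{n}{2} + \frac{\Lambda^*(P)}{2} - 1. \]
I do not foresee any real obstacle here: the argument is almost a verbatim adaptation of Theorem 4, the only conceptual move being the introduction of the intersection term to correct for the failure of additivity of $\omega_P$ on $(A,B)$. The definition of $\Lambda$ is engineered precisely so that this correction term is absorbed, and verifying that $\langle M_P \rangle = P$ is immediate.
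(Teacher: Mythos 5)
Your proof is correct and follows essentially the same route as the paper, which simply cites the argument of Theorem 4 with the second equation replaced by $n + |\langle A\rangle\cap\langle B\rangle| = \omega^\vee_P(A,B) + \min\{\omega_P(A),\omega_P(B)\}$ --- exactly the inclusion--exclusion correction you derive. Your use of $\max\{a,b\}=\tfrac{1}{2}(a+b+|a-b|)$ is just the paper's ``sum the two equations and divide by two'' step written as a single identity, so the two arguments coincide.
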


\begin{proof}
The proof is analogous to the one given in Theorem 4. The only difference is that in the general case we have
\[ n + | \langle A \rangle \cap \langle B \rangle | = \omega^\vee_P (A,B) + \min \{ \omega_P (A) , \omega_P (B) \} .\]
\end{proof}

\begin{mypro}
Let $P$ be a hierarchical poset of size $n$ and $M_P= \{ x_1 , x_2 , \ldots , x_m \}$ the maximal elements of $P$. Then,
\[ R(P) = n + \dfrac{ [ \text{$m$ is odd} ]}{2} - \dfrac{m}{2} - 1 ,\]
as found in \cite{feli11}.
\end{mypro}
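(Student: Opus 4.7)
The plan is to apply Theorem 5, which reduces the computation of $R(P)$ to that of the minimum discordancy $\Lambda^{*}(P)$. So the whole task boils down to understanding what partitions of $M_P$ look like in a hierarchical poset and evaluating $\Lambda(A,B)$ on them.

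First I would exploit the defining property of a hierarchical poset: elements sit on levels, and two elements are comparable iff they are on different levels, with the lower one below the higher one. Since $M_P$ consists exactly of the elements on the top level, every maximal element $x_i$ generates the same ``cone below'', namely everything strictly below the top level, which has $n-m$ elements. Consequently, for any nonempty $A\subseteq M_P$ with $|A|=a$, one has $\omega_P(A)=|\langle A\rangle|=a+(n-m)$, and for two nonempty parts $A,B$ with $A\sqcup B=M_P$ one has $\langle A\rangle\cap\langle B\rangle=\langle M_P\rangle\setminus M_P$, a set of size $n-m$.

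Next I would plug this into the discordancy. For $1\le a\le m-1$,
\[ \Lambda(A,B)=|\omega_P(A)-\omega_P(B)|+|\langle A\rangle\cap\langle B\rangle|=|2a-m|+(n-m), \]
while the degenerate partitions $a=0$ or $a=m$ give $\Lambda=n$. Minimising $|2a-m|$ over $1\le a\le m-1$ yields $0$ when $m$ is even (take $a=m/2$) and $1$ when $m$ is odd (take $a=(m-1)/2$); for $m=1$ only the degenerate partition exists, but then the formula $[m\text{ odd}]+(n-m)=n$ still gives the correct value. Hence
\[ \Lambda^{*}(P)=[m\text{ is odd}]+(n-m). \]

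Finally I would substitute into Theorem 5:
\[ R(P)=\frac{n}{2}+\frac{\Lambda^{*}(P)}{2}-1=\frac{n}{2}+\frac{[m\text{ is odd}]+(n-m)}{2}-1=n+\frac{[m\text{ is odd}]}{2}-\frac{m}{2}-1. \]
The only subtle point, and the one I would be most careful about, is the case analysis at the boundary $a\in\{0,m\}$ and $m=1$, to confirm that the minimum is always attained by a balanced nontrivial split (or, for $m=1$, trivially matches the formula). Everything else is a direct computation using the hierarchical structure and Theorem 5.
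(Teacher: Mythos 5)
Your proposal is correct and follows essentially the same route as the paper: both apply Theorem 5 and observe that for any nontrivial partition $(A,B)$ of $M_P$ the intersection $\langle A\rangle\cap\langle B\rangle$ is exactly $P-M_P$, reducing the problem to minimizing the discrepancy of an anti-chain of size $m$ (which you compute directly as $\min|2a-m|$ rather than citing the Hamming-poset example), with the same separate check for the degenerate cases $a\in\{0,m\}$ and $m=1$.
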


\begin{proof}
If $m=1$:

Then there exists only one partition for $M_P$, $( \{ x_1 \}, \emptyset )$. But $\Lambda  (  \{ x_1 \}, \emptyset ) = n$, and therefore,
\[ \Lambda^* (P) = n .\]

If $m>1$:

The trivial partition $ (M_P , \emptyset) $ is certainly not optimal. Let $(A,B)$ be a non trivial partition. Then 
\[ \langle A  \rangle \cap \langle B \rangle = P-M_P .\]
Thus, the discordancy of any non trivial partition is
\[ \Lambda (A,B) = \Delta (A,B) + | P-M_P | .\]
In this case, minimizing the discordancy is equivalent to minimizing the discrepancy. Since $P-M_P$ is in the intercection of the ideals of any non trivial partition, our problem is equivalent to partitioning the poset $M_P$, a Hamming poset (anti-chain). Therefore, 
\[ \Lambda^* (P) = [ \text{$m$ is odd} ] + | P-M_P | .\]
\end{proof}

\begin{figure}[htb]
\centering
\includegraphics[scale=0.2]{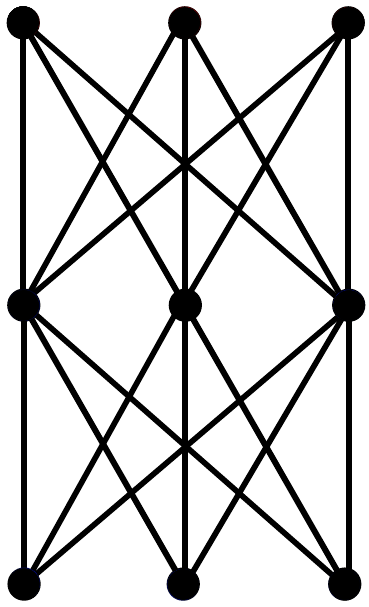}
\caption{Using the notation from our last proposition, the poset $P$ above has parameters $n = 9$, $m = 3$ e $| P - M_P | = 6$. Thus, $ \Lambda^* (P) = 7 $, and therefore, $ R(P) = \dfrac{9+7}{2} - 1 = 7$.  }
\end{figure}

\subsection{The Differencing Method for Posets}

We now aim to generalize the differencing method so that we can use them for any poset. In the disjoint ideals case the only information needed to find the packing radius is the weights of its maximal elements. In the general case we must also consider the intersections between their ideals. To do this, instead of considering numbers we will consider vectors.

\begin{mydef}
Let $P= ( [n] , \preceq )$ be a poset. Given $x \in P$, we denote its \textbf{adjacency vector} by $\hat{x}$ where its coordinates are defined as
\[ \hat{x}_i =  [ i \preceq x ] .\]
\end{mydef}

\begin{myexe}
Let $P = ( [n] , \preceq )$be a poset. Then, its adjacency matrix is
\[  \begin{pmatrix}
\vdots & \vdots & \vdots  & \vdots \\ 
\hat{1}^T& \hat{2}^T & \cdots  & \hat{n}^T \\ 
\vdots & \vdots & \vdots & \vdots
\end{pmatrix} \]
\end{myexe}

Given the set $M_P = \{ x_1, x_2 , \ldots , x_m \}$ of the maximal elements of a poset, we have associated to it a list of adjacency vectors $( \hat{x}_1, \hat{x}_2 , \ldots , \hat{x}_m )$. We will define two operators, the differencing operator($\ominus$), and the associating operator($\oplus$) which will operate on the vectors from the list of adjacency vectors in such a way that differencing two vectors becomes equivalent to committing to place the maximal elements they represent in different subsets, and associating two vectors becomes equivalent to committing to place the maximal elements they represent in the same subset, thus generalizing the differencing method used in the classic partition problem.

\begin{mydef}
Let $X = \{ 0,1,-1,i \}$. The \textbf{differencing and associating operators} are defined by the following tables:

\begin{tabular}{ | c || c | c | c | c | } \hline
$\ominus$ & 0 & 1 & -1 & i \\ \hline \hline
0 & 0 & -1 & 1 & i \\ \hline
1 & 1 & i & 1 & i \\ \hline
-1 & -1 & -1 & i & i \\ \hline
i & i & i & i & i \\ \hline
\end{tabular}
\quad
\begin{tabular}{ | c || c | c | c | c | } \hline
$\oplus$ & 0 & 1 & -1 & i \\ \hline \hline
0 & 0 & 1 & -1 & i \\ \hline
1 & 1 & 1 & i & i \\ \hline
-1 & -1 & i & -1 & i \\ \hline
i & i & i & i & i \\ \hline
\end{tabular}

The value of $x \ominus y$ is found in the $x$ row and $y$ column, for example, \[ 1 \ominus -1 = 1 .\]

For the associating operator the order is immaterial since
\[ x \oplus y = y \oplus x .\]

In the case of two vectors $\hat{x}, \hat{y} \in X^n$, for some $n$, the operators are defined coordinate by coordinate:
\[ ( \hat{x} \oplus \hat{y} )_i = \hat{x}_i \oplus \hat{y}_i \]
\[ ( \hat{x} \ominus \hat{y} )_i = \hat{x}_i \ominus \hat{y}_i .\]
\end{mydef}

The differencing and associating operators behave similarly to addition and subtraction. We define $\oplus x$ to be $0 \oplus x$ and $\ominus x$ to be 
$0 \ominus x$. We now list some of these operators properties:

\begin{mypro}
Let $x,y,z \in \{ 0,1,-1,i \}$. Then:

\begin{enumerate}
\item $x \oplus y = y \oplus x$
\item $(x \oplus y) \oplus z = x \oplus ( y \oplus z )$
\item $0 \oplus x = x \oplus 0 = x$
\item $ \oplus ( x \oplus y ) = \oplus x \oplus y$
\item $\oplus ( x \ominus y ) = \oplus x \ominus y$
\item $ \ominus ( x \oplus y ) = \ominus x \ominus y $
\item $ \ominus ( x \ominus y ) = \ominus x \oplus y $
\end{enumerate}

These properties extend naturally to the vector case.
\end{mypro}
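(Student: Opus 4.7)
The plan is a direct case analysis based on the two defining tables, organized so that three structural observations eliminate almost all of the cases. The three observations are: (a) the symbol $i$ is absorbing for both operations, i.e.\ $x \oplus i = i \oplus x = i$ and $x \ominus i = i \ominus x = i$ for every $x \in X$; (b) $0$ is a two-sided identity for $\oplus$, which is property 3 read directly off the $0$-row and $0$-column of the $\oplus$ table; (c) the unary prefix $\ominus(\cdot) = 0 \ominus (\cdot)$ fixes $0$ and $i$ and swaps $1$ with $-1$, while the unary prefix $\oplus(\cdot) = 0 \oplus (\cdot)$ is the identity by (b). Property 1 (commutativity of $\oplus$) is then visible as symmetry of the $\oplus$ table across the main diagonal.

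Given (a)--(c), properties 4 and 5 are single-line verifications: since the unary $\oplus$ is the identity, both $\oplus(x \oplus y)$ and $\oplus x \oplus y$ collapse to $x \oplus y$, and both $\oplus(x \ominus y)$ and $\oplus x \ominus y$ collapse to $x \ominus y$. For associativity (property 2) I would use (a) and (b) to discard every case in which some argument is $i$ (both sides equal $i$ by absorption) or some argument is $0$ (both sides collapse to the $\oplus$ of the remaining two). What remains are the $2^3 = 8$ cases with $x, y, z \in \{1, -1\}$. When all three signs agree, both groupings yield the common sign via $\pm 1 \oplus \pm 1 = \pm 1$; in the six mixed-sign cases, at least one application of $1 \oplus -1 = i$ occurs in each grouping, and absorption then forces both sides to equal $i$.

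For properties 6 and 7 absorption again handles every case involving $i$, leaving the nine pairs with $x, y \in \{0, 1, -1\}$. On this subset, $\ominus$ restricted to a single operand is ordinary negation and $\oplus$ agrees with integer addition except that $1 \oplus -1 = i$ rather than $0$; the two identities are then checked cell by cell, noting that in the exceptional cells ($1 \oplus -1$ for property 6, and $1 \ominus 1$ or $-1 \ominus -1$ for property 7) the right-hand side likewise involves an $i$-producing operation, so both sides remain equal to $i$. The coordinatewise extension to vectors is immediate from the coordinatewise definition of the operators. I expect the principal obstacle to be purely organizational: keeping the case split clean so that the exceptional value $1 \oplus -1 = i$ is never silently conflated with the ordinary signed cancellation $1 + (-1) = 0$.
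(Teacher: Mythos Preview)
Your proposal is correct and follows the same approach the paper indicates: the paper's own proof simply states that the result ``follows in a straightforward manner from the definition, but is omitted since it may be quite lengthy,'' i.e.\ a direct case check against the two tables. You have carried out exactly that case analysis, and your structural shortcuts (absorption by $i$, $0$ as $\oplus$-identity, unary $\ominus$ as the sign-swap) are valid and genuinely reduce the work; in particular your handling of the residual $\{1,-1\}$ cases for properties 2, 6, and 7 checks out against the tables.
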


\begin{proof}
This proof follows in a straightforward manner from the definition, but is omitted since it may be quite lengthy.
\end{proof}

Analogously to the classical case we need to associate a partition to every expression involving the differencing and associating operators. We begin with simple expressions.

\begin{mydef}
Let $( \hat{x}_1, \hat{x}_2 , \ldots , \hat{x}_m )$ be the list of adjacency vectors associated with the maximal elements of a poset. A \textbf{simple expression} involving the elements of this list and the operators $\oplus$ and $\ominus$ is a sequence of the form
\[ *_{1} \hat{x}_{k_1} *_{2} \hat{x}_{k_2} \ldots *_n \hat{x}_{k_n} ,\]
where $k_i \in [m]$ and $*_i = \oplus$ or $*_i = \ominus$.
\end{mydef}

The \textbf{partition associated} with the simple expression is the partition $(A,B)$ defined as follows:
\[ A = \{ x_{k_i} : *_i = \oplus \} \]
\[ B  = \{ x_{k_i} : *_i = \ominus \} .\]

$A$ is called the \textbf{primary set} and $B$ is called the \textbf{secondary set}.

We now extend the definition to any expression.

\begin{mydef}
The partition associated with any expression is the partition associated with the simple expression obtained by applying the properties of Proposition 3 to the original expression.
\end{mydef}

We make the following notation abuse: we sometimes denote an expression by the vector we would obtain by following the calculations on the expression, i.e. we might denote the expression $\hat{x} \oplus \hat{y}$ by the vector $\hat{v}= \hat{x} \oplus \hat{y}$. The $k$ coordinate of an expression is then the $k$ coordinate of the vector which is a result of the expression if it were calculated.

We denote the primary set of an expression $\hat{v}$ by $Pri(\hat{v})$, and its secondary set by $Sec(\hat{v})$.

\begin{myexe}
Let $P$ be a poset with  maximal elements $M_P = \{ x_1 , x_2, x_3 , x_4 \}$. The adjacency vectors associate with $M_P$ are $( \hat{x}_1, \hat{x}_2 , \hat{x}_3 , \hat{x}_4 )$. The simple form of the expression 
\[ ( \hat{x}_1 \ominus \hat{x}_2 ) \ominus ( \hat{x}_3 \ominus \hat{x}_4 ) \]
is
\[ \oplus \hat{x}_1 \ominus \hat{x}_2 \ominus \hat{x}_3 \oplus \hat{x}_4 \]
which is associated to the partition $( \{ x_1,x_4 \}, \{ x_2, x_3 \} )$ where $\{ x_1,x_4 \}$ is the primary set and $\{ x_2, x_3 \}$ is the secondary set.
\end{myexe}

The primary and secondary sets have the following properties.

\begin{mypro}
Let $P$ be a poset with maximal elements $M_P = \{ x_1, x_2 , \ldots , x_m \}$, and associated adjacency vectors $( \hat{x}_1, \hat{x}_2 , \ldots , \hat{x}_m )$. If $\hat{v}$ and $\hat{w}$ are two expressions using elements from the adjacency list, then:
\begin{enumerate}
\item $Pri ( \hat{v} \oplus \hat{w} )= Pri ( \hat{v} ) \cup Pri ( \hat{w} )$
\item $Sec ( \hat{v} \oplus \hat{w} )= Sec ( \hat{v} ) \cup Sec ( \hat{w} )$
\item $Pri ( \hat{v} \ominus \hat{w} )= Pri ( \hat{v} ) \cup Sec ( \hat{w} )$
\item $Sec ( \hat{v} \ominus \hat{w} )= Sec ( \hat{v} ) \cup Pri ( \hat{w} )$
\end{enumerate}
\end{mypro}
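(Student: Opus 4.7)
The plan is to reduce every expression to simple form using Proposition 3 and then simply read off the partition from the signs in front of each adjacency vector. Given two expressions $\hat{v}$ and $\hat{w}$, repeatedly apply the rewriting rules (1)--(7) of Proposition 3 until each is in simple form
\[ \hat{v} = *_{1}^{v}\hat{x}_{k_1^v} *_{2}^{v}\hat{x}_{k_2^v} \cdots *_{p}^{v}\hat{x}_{k_p^v}, \qquad \hat{w} = *_{1}^{w}\hat{x}_{k_1^w} *_{2}^{w}\hat{x}_{k_2^w} \cdots *_{q}^{w}\hat{x}_{k_q^w}, \]
so that by Definition (of the partition associated to a simple expression) we have $Pri(\hat{v}) = \{x_{k_i^v} : *_i^v = \oplus\}$, $Sec(\hat{v}) = \{x_{k_i^v} : *_i^v = \ominus\}$, and similarly for $\hat{w}$. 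The associated partition of any expression only depends on the simple form it reduces to, so it suffices to compute the simple forms of $\hat{v} \oplus \hat{w}$ and $\hat{v} \ominus \hat{w}$.

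For properties (1) and (2), I would apply property (4) of Proposition 3 iteratively to push the outer $\oplus$ inside $\hat{w}$, showing by induction on $q$ that $\hat{v} \oplus \hat{w}$ reduces to the concatenation $*_{1}^{v}\hat{x}_{k_1^v} \cdots *_{p}^{v}\hat{x}_{k_p^v} *_{1}^{w}\hat{x}_{k_1^w} \cdots *_{q}^{w}\hat{x}_{k_q^w}$ in simple form. Since every $*_i^v$ and $*_i^w$ is preserved, reading off the primary and secondary sets gives the unions claimed.

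For properties (3) and (4), I would first use property (5) to see that $\hat{v} \ominus \hat{w}$ rewrites as $\hat{v}$ (unchanged) followed by $\ominus \hat{w}$, so $\hat{v}$'s contribution to $Pri$ and $Sec$ is untouched. It remains to show that $\ominus \hat{w}$, in simple form, has each sign $*_i^w$ flipped, which I would prove by induction on the length $q$ of $\hat{w}$ using properties (6) and (7): (6) handles the case where the last operator in $\hat{w}$ is $\oplus$, (7) handles the case where it is $\ominus$, and in each step exactly one sign is flipped while the tail is passed into the inductive hypothesis. After the flip, what was primary in $\hat{w}$ becomes secondary in $\ominus \hat{w}$ and vice versa, yielding $Pri(\hat{v} \ominus \hat{w}) = Pri(\hat{v}) \cup Sec(\hat{w})$ and $Sec(\hat{v} \ominus \hat{w}) = Sec(\hat{v}) \cup Pri(\hat{w})$.

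The argument is essentially bookkeeping; the only potential obstacle is making the induction on $q$ rigorous, since simple form is defined as a flat sequence but the rewriting rules act on the outermost binary operator. To handle this cleanly I would fix a canonical left-associated parenthesization of the simple form, so that each inductive step strips off one rightmost term and applies exactly one of rules (4)--(7). Given Proposition 3, no further combinatorial insight is required.
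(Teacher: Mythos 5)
Your proposal is correct and follows essentially the same route as the paper: put both expressions in simple form and observe that, via properties (4)--(7) of Proposition 3, $\oplus$ preserves the signs in front of the vectors of $\hat{w}$ while $\ominus$ flips them, then read off the primary and secondary sets. The only (harmless) difference is that you spell out the induction on the length of $\hat{w}$ and fix a left-associated parenthesization, which the paper leaves implicit; note that in your $\oplus$ case the inductive step needs rule (5) as well as rule (4), depending on whether the stripped-off term of $\hat{w}$ carries $\ominus$ or $\oplus$, exactly as you already do with rules (6) and (7) in the $\ominus$ case.
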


\begin{proof}
Just put $\hat{v}$ and $\hat{w}$ in their simple forms and note that, by properties $4$ and $5$ of Proposition $3$, $\oplus$ does not alter the operators in front of the vectors and, by properties $6$ and $7$ of Proposition $3$, $\ominus$ switches all the operators in front of the vectors of $\hat{w}$.
\end{proof}

We can now prove the following:

\begin{mylem}
Let $P$ be a poset of size $n$, $M_P = \{ x_1, x_2 , \ldots , x_m \}$ be the maximal elements of $P$, $( \hat{x}_1, \hat{x}_2 , \ldots , \hat{x}_m )$ be the adjacency vectors associated with $M_P$, and $\hat{v}$ and $\hat{w}$ be two expressions using the adjacency vectors. 
Suppose that for every $k\in [n]$:
\begin{enumerate}
\item $\hat{v}_k = 0 \Leftrightarrow k \notin \langle Pri(\hat{v}) \rangle \cup \langle Sec(\hat{v}) \rangle$
\item $\hat{v}_k = 1 \Leftrightarrow k \in \langle Pri(\hat{v}) \rangle - \langle Sec(\hat{v}) \rangle$
\item $\hat{v}_k = -1 \Leftrightarrow k \in \langle Sec(\hat{v}) \rangle - \langle Pri(\hat{v}) \rangle$
\item $\hat{v}_k = i \Leftrightarrow k \in \langle Pri(\hat{v}) \rangle \cup \langle Sec(\hat{v}) \rangle$
\end{enumerate}
and that these properties are still true when we substitute $\hat{v}$ for $\hat{w}$. Then, the properties listed are still true when we substitute $\hat{v}$ for $\hat{v} \oplus \hat{w}$ or $\hat{v} \ominus \hat{w}$.
\end{mylem}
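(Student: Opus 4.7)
The plan is to reduce the four stated biconditionals to a single compact encoding and then verify the compatibility of the operator tables with Proposition 4. For any expression $\hat{u}$ and coordinate $k \in [n]$, I would introduce the indicator pair
\[
(p_k(\hat{u}), s_k(\hat{u})) = ([k \in \langle Pri(\hat{u}) \rangle], [k \in \langle Sec(\hat{u}) \rangle]) \in \{0,1\}^2.
\]
Under the hypotheses of the lemma, the four conditions listed for $\hat{v}$ (and for $\hat{w}$) repackage as the single statement that $\hat{v}_k$ is determined by its indicator pair via the bijection $0 \leftrightarrow (0,0)$, $1 \leftrightarrow (1,0)$, $-1 \leftrightarrow (0,1)$, $i \leftrightarrow (1,1)$. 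The task is then to show that the same bijection holds for $\hat{v} \oplus \hat{w}$ and $\hat{v} \ominus \hat{w}$.

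The second step is to combine Proposition 4 with the elementary identity $\langle A \cup B \rangle = \langle A \rangle \cup \langle B \rangle$, which is immediate from the definition of the ideal generated by a set. Together with the coordinate-wise action of $\oplus$ and $\ominus$, this yields
\begin{align*}
p_k(\hat{v} \oplus \hat{w}) &= p_k(\hat{v}) \vee p_k(\hat{w}), & s_k(\hat{v} \oplus \hat{w}) &= s_k(\hat{v}) \vee s_k(\hat{w}), \\
p_k(\hat{v} \ominus \hat{w}) &= p_k(\hat{v}) \vee s_k(\hat{w}), & s_k(\hat{v} \ominus \hat{w}) &= s_k(\hat{v}) \vee p_k(\hat{w}),
\end{align*}
where $\vee$ denotes boolean OR. These formulas predict the indicator pair, and hence the target symbol, that the $k$-th coordinate of the compound expression must take.

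The final step is to verify that the tables defining $\oplus$ and $\ominus$ produce exactly the predicted symbol for every one of the sixteen combinations of $\hat{v}_k, \hat{w}_k \in \{0,1,-1,i\}$. For example, $1 \ominus (-1) = 1$ matches $(1 \vee 1,\, 0 \vee 0) = (1,0) \leftrightarrow 1$, while $1 \ominus 1 = i$ matches $(1 \vee 0,\, 0 \vee 1) = (1,1) \leftrightarrow i$. The main obstacle is the bookkeeping of these thirty-two table entries rather than any subtle structural step: the real content of the lemma lies entirely in Proposition 4 and the distribution of ideals over unions, after which the operator tables were tailored precisely to agree with the indicator-OR formulas, so the case check is routine.
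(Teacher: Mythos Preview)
Your proposal is correct and follows essentially the same approach as the paper: reduce to Proposition~4 and then carry out the thirty-two-case check on the operator tables. Your indicator-pair encoding $(p_k,s_k)$ and the boolean-OR formulas are a tidy way to organize that check, but the underlying argument is identical to the paper's case analysis.
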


\begin{proof}
The proof consists in separating in all possible cases and using the properties from Proposition 4. Since there are four possible values for $\hat{v}_k$ and for $\hat{w}_k$ and there are two operations, the total number of cases is thirty-two. We will only show one case as an example. The proof for the other cases is completely analogous.

\textbf{Associating case for $\hat{v}_k = 1$ and $\hat{w}_k = -1$:}

In this case,
\[ \hat{v}_k \oplus \hat{w}_k = i .\]
Our two hypothesis tell us that 
\[k \in \langle Pri(\hat{v}) \rangle - \langle Sec(\hat{v}) \rangle \] 
and
\[ k \in \langle Sec(\hat{w}) \rangle - \langle Pri(\hat{w}) \rangle .\]
But then,
\[ k \in \langle Pri(\hat{v}) \rangle \subseteq \langle Pri(\hat{v}_k \oplus \hat{w}_k) \rangle \]
and
\[ k \in \langle Sec(\hat{w}) \rangle \subseteq \langle Sec(\hat{v}_k \oplus \hat{w}_k) \rangle ,\]
and therefore,
\[ k \in \langle Pri(\hat{v}_k \oplus \hat{w}_k) \rangle \cup \langle Sec(\hat{v}_k \oplus \hat{w}_k) \rangle .\]
\end{proof}

Since the list of vectors $( \hat{x}_1, \hat{x}_2 , \ldots , \hat{x}_m )$ satisfies the conditions of our lemma the differencing and associating operators work in the way we intended them to. In the same way we built a tree for the classic partition problem, we can now build a tree for the poset partition problem where in each node we will have a list of vectors, left branches substitute two vectors for their difference, and right branches substitute them for their associate. By our last lemma the terminal nodes, consisting of single vectors, must have the information necessary to calculate the discordancy of the partition associated with each vector. To extract this information we need the following:

\begin{mydef}
Let $\hat{v} \in \{0,1,-1,i \}^n$. We define the sum of entries function as
\[ S (\hat{v}) = \sum_{k = 1}^{n} \hat{v}_k \]
where $i$ is treated formally as if it where the imaginary unit.
\end{mydef}

\begin{myexe}
If $\hat{v} = (1, 1, -1 , i , 1, i)$ then
\[ S(\hat{v}) = 2 +2i .\]
\end{myexe}

We can therefore refer to the real part of $S(\hat{v})$, denoted as $ \Re ( S(\hat{v}) ) $, and the imaginary part, denoted as $ \Im ( S(\hat{v}) ) $. We can now find explicitly the discordancy of the terminal nodes of our tree.

\begin{myteo}
Let $P$ be a poset with maximal elements $M_P = \{ x_1, x_2 , \ldots , x_m \} $, $( \hat{x}_1, \hat{x}_2 , \ldots , \hat{x}_m )$ be the adjacency vectors associated with $M_P$, and $\hat{v}$ be an expression using all the adjacency vectors exactly once. If we denote by $(A,B)$ the partition of $M_P$ associated to $\hat{v}$, then,
\[ \Delta (A,B) = | \Re ( S(\hat{v}) ) | \] 
and
\[ | \langle A \rangle \cap \langle B \rangle |  = \Im ( S(\hat{v}) ) .\]
Thus,
\[ \Lambda (A,B) = | \Re ( S(\hat{v}) ) | + \Im ( S(\hat{v}) ) .\]
\end{myteo}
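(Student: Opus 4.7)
The plan is to deduce the theorem as an essentially bookkeeping consequence of Lemma 5, after first verifying that Lemma 5 applies to any expression $\hat{v}$ built from the adjacency vectors $\hat{x}_1,\ldots,\hat{x}_m$.

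The first step is to establish the base case. For a single adjacency vector $\hat{x}_k$, viewed as the trivial expression $\oplus \hat{x}_k$, we have $Pri(\hat{x}_k) = \{x_k\}$ and $Sec(\hat{x}_k) = \emptyset$. By the definition of the adjacency vector, $(\hat{x}_k)_j = [j \preceq x_k]$, which is $1$ on $\langle x_k \rangle = \langle Pri(\hat{x}_k) \rangle \setminus \langle Sec(\hat{x}_k)\rangle$ and $0$ outside; the $-1$ and $i$ cases are vacuous. Hence all four equivalences of Lemma 5 hold. Since $\hat{v}$ is built from these base vectors using $\oplus$ and $\ominus$, repeated application of Lemma 5 shows that the same four equivalences hold for $\hat{v}$, where we write $A = Pri(\hat{v})$ and $B = Sec(\hat{v})$.

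The second step is to partition the index set $[n]$ into four pieces according to the value of $\hat{v}_k$ and count. By the equivalences just established,
\[ \#\{k : \hat{v}_k = 1\} = |\langle A \rangle \setminus \langle B \rangle|,\quad \#\{k : \hat{v}_k = -1\} = |\langle B \rangle \setminus \langle A \rangle|, \]
\[ \#\{k : \hat{v}_k = i\} = |\langle A \rangle \cap \langle B \rangle|,\quad \#\{k : \hat{v}_k = 0\} = n - |\langle A \rangle \cup \langle B \rangle|. \]
Since the zero entries contribute nothing and $i$ is treated formally as the imaginary unit,
\[ S(\hat{v}) = \bigl(|\langle A\rangle \setminus \langle B\rangle| - |\langle B\rangle \setminus \langle A\rangle|\bigr) + i\,|\langle A\rangle \cap \langle B\rangle|. \]
Reading off the imaginary part immediately gives $|\langle A\rangle \cap \langle B\rangle| = \Im(S(\hat{v}))$, which is the second asserted identity.

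The third step is to identify the real part with the signed weight difference. Using inclusion--exclusion,
\[ \omega_P(A) = |\langle A\rangle| = |\langle A\rangle \setminus \langle B\rangle| + |\langle A\rangle \cap \langle B\rangle|, \]
and symmetrically for $B$. Subtracting yields $\omega_P(A) - \omega_P(B) = |\langle A\rangle \setminus \langle B\rangle| - |\langle B\rangle \setminus \langle A\rangle| = \Re(S(\hat{v}))$, so $\Delta(A,B) = |\omega_P(A)-\omega_P(B)| = |\Re(S(\hat{v}))|$. Adding the two identities gives the formula for $\Lambda(A,B)$ by its definition. The only conceptual obstacle is the appeal to Lemma 5, which has already been done for us; everything after that is a direct computation, so I would expect the write-up itself to be short.
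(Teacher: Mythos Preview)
Your proposal is correct and follows essentially the same route as the paper: both arguments invoke Lemma 5 to read off, for each coordinate $k$, which of the four sets $\langle A\rangle\setminus\langle B\rangle$, $\langle B\rangle\setminus\langle A\rangle$, $\langle A\rangle\cap\langle B\rangle$, or the complement it lies in from the value $\hat{v}_k\in\{1,-1,i,0\}$, and then sum. The only difference is presentational: you spell out the base case (that a single adjacency vector $\hat{x}_k$ satisfies the four equivalences) and the inductive appeal to Lemma 5, whereas the paper simply asserts that the adjacency vectors satisfy the hypotheses of Lemma 5 and proceeds directly to the Iverson-bracket counts.
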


\begin{proof}
Without loss of generality, suppose $A = Pri (\hat{v})$ and $B = Sec (\hat{v})$. Then, since the adjacency vectors of $M_P$ satisfy the conditions of Lemma 5, it follow that,
\[ | \langle A \rangle | = \sum_{k=1}^n [v_k = 1] + \sum_{k=1}^n [v_k = i] \]
and
\[ | \langle B \rangle | = \sum_{k=1}^n [v_k = -1] + \sum_{k=1}^n [v_k = i] .\]
Thus,
\begin{align*}
\Delta (A,B) &= \left | \sum_{k=1}^n [v_k = 1] - \sum_{k=1}^n [v_k = -1] \right |  \\
& = \left |  \sum_{k=1}^n v_k [v_k \neq i] \right | \\
& = \left | \Re ( S(\hat{v}) ) \right | .
\end{align*}

But by Lemma 5 we also have
\begin{align*}
| \langle A \rangle \cap \langle B \rangle | &=  \sum_{k=1}^n [v_k = i] \\
&= \Im ( S(\hat{v}).
\end{align*}
\end{proof}

We can now build a tree similar to the one in the CKK algorithm. We will make one modification, nonetheless, we will substitute the lists of vectors for matrices.

\begin{mydef}
Let $P$ be a poset of size $n$ with maximal elements $M_P = \{ x_1, x_2 , \ldots , x_m \}$, and $( \hat{x}_1, \hat{x}_2 , \ldots , \hat{x}_m )$ be the adjacency vectors associated with $M_P$. We define the \textbf{radius matrix} of $P$ as
\[  \begin{pmatrix}
\vdots & \vdots & \vdots  & \vdots \\ 
\hat{x}_1^T& \hat{x}_2^T & \cdots  & \hat{x}_m^T \\ 
\vdots & \vdots & \vdots & \vdots
\end{pmatrix} .\]
\end{mydef}

Note that this matrix can be obtained from the adjacency matrix of $P$ by removing the columns which do not correspond to maximal elements.

By Theorem 6, the packing radius of a poset is completely determined by its radius matrix. The packing radius can therefore be seen as the property of a matrix.

We will now extend the definition of discordancy and packing radius to matrices.

We begin by the discordancy of a vector.

\begin{mydef}
Let $\hat{v}$ be a vector with elements in $\{ 0,1,-1,i \}$. The \textbf{discordancy of $\hat{v}$} is defined as 
\[ \Lambda ( \hat{v}) =  | \Re ( S(\hat{v}) ) | + \Im ( S(\hat{v}) ) .\]
\end{mydef}

The discordancy of a matrix will be given recursively.

\begin{mydef}
Let
\[ M = \begin{pmatrix}
\vdots & \vdots & \vdots  & \vdots \\ 
\hat{x}_1^T& \hat{x}_2^T & \cdots  & \hat{x}_m^T \\ 
\vdots & \vdots & \vdots & \vdots
\end{pmatrix} \]
be a matrix with elements in $\{ 0,1,-1,i \}$. 

We define $M^{\oplus}_{j,k}$ as the matrix
\[ \begin{pmatrix}
\vdots & \vdots & \vdots  & \vdots & \vdots & \vdots \\ 
\hat{x}_1^T& \hat{x}_2^T & \cdots & \hat{x}_j^T \oplus \hat{x}_k^T &\cdots  & \hat{x}_m^T \\ 
\vdots & \vdots & \vdots & \vdots & \vdots & \vdots
\end{pmatrix}  
,\]
i.e. the matrix $M$ after substituting columns $j$ and $k$ with the column associating their corresponding vectors.

The matrix $M^{\ominus}_{j,k}$ is defined analogously.
\end{mydef}

Note that both $M^{\oplus}_{j,k}$ and $M^{\ominus}_{j,k}$ have one less column than $M$.

\begin{mydef}
Let $M$ be a matrix with elements in $\{ 0,1,-1,i \}$. The \textbf{minimum discordancy of $M$} is defined as
\[ \Lambda^* ( M ) = \min \left \{
\Lambda^* ( M^{\oplus}_{j,k} )
,
\Lambda^*   M^{\ominus}_{j,k}
  \right \}
,\]
where the choice of $j$ and $k$ is irrelevant as long as they are different. 

The minimum discordancy of a vector is defined as its discordancy.
\end{mydef}

We now define the packing radius of a matrix.

\begin{mydef}
Let $M$ be a matrix with elements in $\{ 0,1,-1,i \}$ and $n$ be the number of rows in $M$ where there exists at least one element different than $0$. The \textbf{packing radius of $M$} is defined as
\[ R(M) = \dfrac{n}{2} + \dfrac{\Lambda^* (M)}{2}-1 .\]
\end{mydef}

Some direct properties of the packing radius of a matrix is that it does not change under row or column permutations.

We can now rephrase Theorem 6 in terms of the radius matrix of a poset.

\begin{myteo}
Let $P$ be a poset and $M$ its radius matrix. Then $R(P) = R(M)$.
\end{myteo}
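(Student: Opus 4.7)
The plan is to reduce the identity $R(P)=R(M)$ to the two equalities (a) $n=n'$, where $n'$ is the number of nonzero rows of the radius matrix $M$, and (b) $\Lambda^*(P)=\Lambda^*(M)$, since then both $R(P)$ and $R(M)$ are given by the same formula $n/2+\Lambda^*/2-1$ (the former by Theorem 5, the latter by definition).

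For (a), I would argue that in a finite poset every element lies below some maximal element: given $k\in[n]$, iterate passing to a strict successor until one reaches a maximal $x_j$ with $k\preceq x_j$. Then $(\hat{x}_j)_k=[k\preceq x_j]=1$, so the $k$-th row of $M$ contains a $1$. Hence every row of $M$ is nonzero and $n'=n$.

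For (b) I would show the two inequalities by exhibiting a correspondence between reductions of $M$ and partitions of $M_P$. In one direction, observe that the recursive definition of $\Lambda^*(M)$ unrolls to a binary tree of $m-1$ column-merging steps whose leaves are the original adjacency vectors $\hat{x}_1,\ldots,\hat{x}_m$ and whose root is a single vector $\hat{v}$ built from an expression using each $\hat{x}_i$ exactly once via $\oplus$ and $\ominus$. By Lemma 5 (applied inductively, with the base case being that each $\hat{x}_i$ trivially satisfies its hypotheses) this $\hat{v}$ continues to encode the primary/secondary ideal membership correctly, so Theorem 6 yields $\Lambda(\hat{v})=\Lambda(A,B)$ where $(A,B)$ is the partition associated with the expression. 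This gives $\Lambda^*(M)\ge\Lambda^*(P)$.

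In the other direction, given any partition $(A,B)$ of $M_P$, I would produce a reduction tree realizing it: order $A=\{x_{i_1},\ldots,x_{i_a}\}$ and $B=\{x_{j_1},\ldots,x_{j_b}\}$, combine within $A$ using $\oplus$ to get a single vector $\hat{a}$, combine within $B$ using $\oplus$ to get $\hat{b}$, and finally apply $\ominus$ to obtain $\hat{a}\ominus\hat{b}$. By Proposition 4 the associated partition is exactly $(A,B)$, and this reduction is one of the branches considered in the recursive definition of $\Lambda^*(M)$, so $\Lambda^*(M)\le\Lambda(A,B)$. Minimizing over partitions gives $\Lambda^*(M)\le\Lambda^*(P)$. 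Combining with (a) concludes the proof.

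The main obstacle is the inductive application of Lemma 5 across an arbitrary reduction tree: one must verify that the four characterizations of $\hat{v}_k$ in terms of $\langle\mathrm{Pri}(\hat{v})\rangle$ and $\langle\mathrm{Sec}(\hat{v})\rangle$ propagate through both $\oplus$ and $\ominus$ at every internal node, but this is precisely the content of Lemma 5 and requires no new calculation. Everything else is a bookkeeping correspondence between reduction trees and partitions of $M_P$.
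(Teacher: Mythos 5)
Your proof is correct and takes essentially the same approach as the paper: both reduce the claim to showing that every row of $M$ is nonzero (so the $n$ in the two formulas agree) and that $\Lambda^*(P)=\Lambda^*(M)$ via Theorem 6. The paper's own proof is just terser, citing Theorem 6 outright where you spell out the correspondence between reduction trees and partitions of $M_P$.
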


\begin{proof}
By Theorem 6, $\Lambda^*(P) = \Lambda^*(M)$.

Let $n$ be the size of $P$. Thus, $M$ has $n$ rows and all of them must have at least one element different from $0$ since $P$ is a poset. Therefore, the packing radius of $M$ is
\[ R(M) = \dfrac{n}{2} + \dfrac{\Lambda^* (P)}{2}-1 .\]
\end{proof}

To construct a searching tree using the differencing method for posets we need to have a criterion for choosing the vectors to differenciate or associate. In the classical problem the best criterion was LDM. We present now a generalization of the LDM criterion and, despite the inexistence of different criteria, conjecture that it should perform well in the same context as in the classical partition problem. We call it the Poset LDM (PLDM) criterion.Our first vector, $\hat{v}$, will be the one that maximizes $\Lambda^*(\hat{v})$, and our second one, $\hat{w}$, will then be the one that minimizes $\Lambda^*(\hat{v} \ominus \hat{w})$. In our examples we will always list these two vectors in the first two columns.

\begin{myexe}
Lets find the packing radius of a poset $P$ with adjacency matrix
\[
\begin{pmatrix}
1 & 0 & 0 & 1 & 1 & 0 & 0 \\ 
0 & 1 & 0 & 0 & 1 & 1 & 1 \\ 
0 & 0 & 1 & 1 & 1 & 0 & 0 \\ 
0 & 0 & 0 & 1 & 0 & 0 & 0 \\ 
0 & 0 & 0 & 0 & 1 & 0 & 0 \\ 
0 & 0 & 0 & 0 & 0 & 1 & 0 \\ 
0 & 0 & 0 & 0 & 0 & 0 & 1
\end{pmatrix}
\]

The maximal elements of $P$ are $M_P = \{ 4 , 5 , 6 , 7 \}$, and therefore, its radius matrix is
\[
\begin{pmatrix}
1 & 0 & 1 & 0 \\ 
1 & 1 & 0 & 1 \\ 
1 & 0 & 1 & 0 \\ 
0 & 0 & 1 & 0 \\
1 & 0 & 0 & 0 \\ 
0 & 1 & 0 & 0 \\ 
0 & 0 & 0 & 1
\end{pmatrix}
\]

\begin{figure}[htb]
\centering
\includegraphics[scale=0.2]{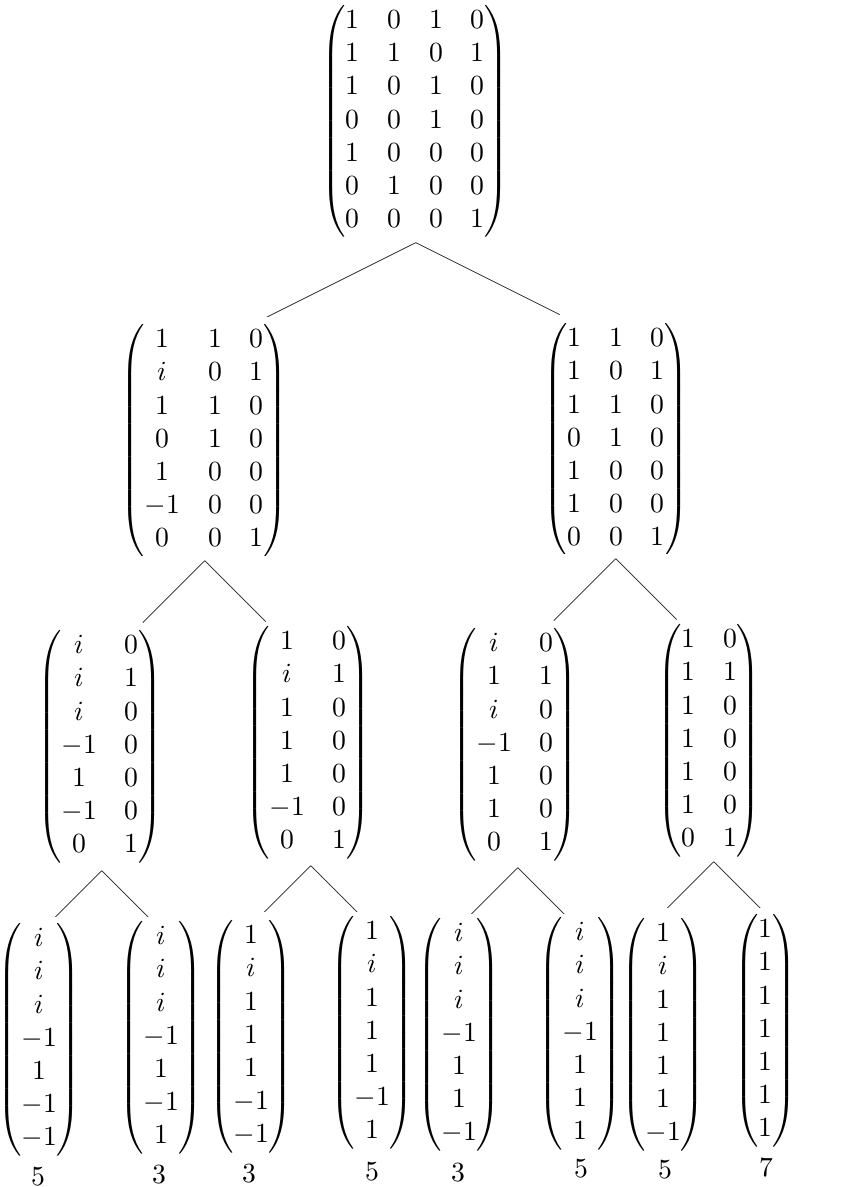}
\caption{ Tree }
\end{figure}

Figure 6 shows the tree due to the differencing method for posets applied to the radius matrix of $P$. From it we conclude that $\Lambda^* (P) = 3$. Thus, $R(P) = 4$.
\end{myexe}

\subsection{Pruning for the Poset Case}

We can shorten our notation in the following way. Note that if a matrix in the node of the tree has an entry with value $i$, then all terminal nodes deriving from it has the value $i$ in that row. This is due to the fact that $i$ is always preserved by the differencing and associating operations. We can therefore omit any line where an $i$ appears and add a counter in front of the matrix. More precisely, we substitute a matrix $L$ for a pair $(\alpha, M)$, called a \textbf{number-matrix}, where $\alpha$ is the number of rows in $L$ in which an $i$ appears and $M$ is obtained from the matrix $L$ by removing those $\alpha$ rows. Then, instead of applying the differencing method on $L$, we apply it on $M$ but we remove again $\beta$ rows that have an $i$ on them and end up with the number-matrix $(\alpha + \beta , N)$. We do this until we get to a terminal node $(\gamma , \hat{v})$, and because of Theorem 6, the discordancy of the terminal node will be $\gamma + S(\hat{v})$.

We denote a number-matrix $(\alpha, M)$ by $\alpha M$ omitting the $\alpha$ if it is zero, taking care to not confuse it with a number multiplying a matrix.

\begin{myexe}
On Figure 6, the first matrix on the left branch is
\[
\begin{pmatrix}
1 & 1 & 0 \\ 
i & 0 & 1 \\ 
1 & 1 & 0 \\ 
0 & 1 & 0 \\ 
1 & 0 & 0 \\
-1 & 0 & 0 \\ 
0 & 0 & 1 
\end{pmatrix}
.\]
Using the number-matrix notation we would substitute it with
\[
1 \begin{pmatrix}
1 & 1 & 1 \\ 
1 & 1 & 0 \\ 
0 & 1 & 0 \\ 
1 & 0 & 0 \\
-1 & 0 & 0 \\ 
0 & 0 & 1 
\end{pmatrix}
.\]
\end{myexe}

We now show how to prune the tree in the poset case.

\begin{myteo}
Let 
\[ \alpha M = \alpha \begin{pmatrix}
\vdots & \vdots & \vdots  & \vdots \\ 
\hat{w}_1^T& \hat{w}_2^T & \cdots  & \hat{w}_k^T \\ 
\vdots & \vdots & \vdots & \vdots
\end{pmatrix} \]
be a number-matrix which is the node of a tree resulting from the differencing method. Then,
\[ \Lambda^* (\alpha M) \geq \alpha + \Delta^* ( S(\hat{w}_1) , S(\hat{w}_2) , \ldots , S(\hat{w}_k) ) \]
\end{myteo}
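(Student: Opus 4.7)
The plan is to show directly that $\Lambda(\hat{v}) \geq \Delta^*(L)$ for the terminal vector $\hat{v}$ of the tree rooted at $M$ that achieves minimum discordancy, where $L=(S(\hat{w}_1),\ldots,S(\hat{w}_k))$. Since the recursive definition of $\Lambda^*$ unwinds to $\Lambda^*(\alpha M)=\alpha+\min_{\hat{v}}\Lambda(\hat{v})$, this will give the required bound after adding $\alpha$ on each side.

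By the definition of simple expression and by iterating Proposition 4, any terminal vector $\hat{v}$ determines a sign assignment $\epsilon\in\{+1,-1\}^k$ and an induced partition $(A,B)=(Pri(\hat{v}),Sec(\hat{v}))$ of $M_P$ that decomposes across columns as $A=\bigsqcup_j A_j$ and $B=\bigsqcup_j B_j$, where $A_j=Pri(\hat{w}_j)$, $B_j=Sec(\hat{w}_j)$ if $\epsilon_j=+1$ and the roles are swapped otherwise. Because the rows in which $\hat{w}_j$ takes value $i$ have already been absorbed into $\alpha$, we have $\langle Pri(\hat{w}_j)\rangle\cap\langle Sec(\hat{w}_j)\rangle=\emptyset$ and hence $S(\hat{w}_j)=\omega_P(Pri(\hat{w}_j))-\omega_P(Sec(\hat{w}_j))$, giving $\epsilon_j S(\hat{w}_j)=\omega_P(A_j)-\omega_P(B_j)$.

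Writing $D=\omega_P(A)-\omega_P(B)$, $I=|\langle A\rangle\cap\langle B\rangle|$, and the ideal over-counts $O_A=\sum_j\omega_P(A_j)-\omega_P(A)\geq 0$, $O_B=\sum_j\omega_P(B_j)-\omega_P(B)\geq 0$, summing the previous identity over $j$ yields the key relation
\[ \sum_j \epsilon_j S(\hat{w}_j) \;=\; D+O_A-O_B, \]
while Theorem 6 reads $\Lambda(\hat{v})=|D|+I$. Since $\Delta^*(L)$ is the minimum of $|\sum_j\sigma_jS(\hat{w}_j)|$ over all sign assignments $\sigma$, we certainly have $\Delta^*(L)\leq|D+O_A-O_B|$, and so the inequality $\Lambda(\hat{v})\geq\Delta^*(L)$ reduces via the triangle inequality to the combinatorial bound $I\geq|O_A-O_B|$.

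The main obstacle will be to verify $I\geq|O_A-O_B|$ at the minimizing $\hat{v}$. The guiding intuition is optimality: if this inequality were violated at $\hat{v}$, one could flip the sign of some well-chosen column $\epsilon_j$ so that the resulting swap of $A_j$ and $B_j$ strictly decreases $|D|+I$, contradicting minimality. Making this rigorous requires a careful bookkeeping of how $\omega_P(A)$, $\omega_P(B)$, and $|\langle A\rangle\cap\langle B\rangle|$ change under such a single-column swap when the sets $U_j=Pri(\hat{w}_j)\cup Sec(\hat{w}_j)$ have overlapping ideals, and pinpointing a column whose flip realizes the needed decrease in every violating configuration; this local-improvement analysis is the technical heart of the proof.
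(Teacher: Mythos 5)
Your reduction is set up correctly, and you have honestly isolated the point on which everything hinges: the bound $I\geq|O_A-O_B|$ at the minimizing terminal node. The paper gives you no help here --- its entire proof is the assertion that $\Lambda^*(M)\geq\Delta^*(S(\hat{w}_1),\ldots,S(\hat{w}_k))$ ``since the appearances of $i$'s only add to $\Lambda^*(M)$.'' That assertion and your missing inequality founder on the same phenomenon: overlaps between ideals of columns placed on the \emph{same} side of the partition. In the operator tables, $1\oplus 1=1$ and $1\ominus(-1)=1$, so a row covered by two same-side columns contributes $1$ to $\Re(S(\hat{v}))$ where the classical sum $\sum_j\epsilon_jS(\hat{w}_j)$ contributes $2$, and no $i$ is produced. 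This is exactly your term $O_A$ (resp.\ $O_B$): it pushes $\Lambda(\hat{v})$ below $|\sum_j\epsilon_jS(\hat{w}_j)|$ with no compensating contribution to $I$.

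Consequently the local-improvement argument you hope for cannot exist, because the statement itself is false. Take $P=([7],\preceq)$ with $\langle 3\rangle=\{1,2,3\}$, $\langle 4\rangle=\{1,2,4\}$, $\langle 7\rangle=\{5,6,7\}$, so $M_P=\{3,4,7\}$ and the radius matrix (the root node, with $\alpha=0$) has column sums $S(\hat{w}_1)=S(\hat{w}_2)=S(\hat{w}_3)=3$, whence $\Delta^*(3,3,3)=3$. But the partition $A=\{3,4\}$, $B=\{7\}$ has $\omega_P(A)=4$, $\omega_P(B)=3$ and $\langle A\rangle\cap\langle B\rangle=\emptyset$, so $\Lambda^*(M)\leq 1<3$. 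In your notation this configuration has $O_A=2$ and $O_B=I=0$, violating $I\geq|O_A-O_B|$ and the theorem alike (and with it the soundness of the pruning rule that follows). The honest conclusion of your analysis is therefore not a proof but a counterexample; your key identity $\sum_j\epsilon_jS(\hat{w}_j)=D+O_A-O_B$ does yield the theorem precisely when same-side overlaps are excluded, e.g.\ in the disjoint-ideals case, where $O_A=O_B=I=0$ and $\Lambda(\hat{v})=|D|=|\sum_j\epsilon_jS(\hat{w}_j)|\geq\Delta^*(L)$ immediately.
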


\begin{proof}
It is simple to show that
\[ \Lambda^* (\alpha M) = \alpha + \Lambda^* (M) .\]

But, $\Lambda^* (M) \geq \Delta^* ( S(\hat{w}_1) , S(\hat{w}_2) , \ldots , S(\hat{w}_k) )$ since the appearances of $i's$ only add to $\Lambda^* (M)$.
\end{proof}

With this theorem we have the following pruning method:
After reaching a terminal node $\beta \hat{v}$ we can prune every number-matrix
\[ \alpha M = \alpha \begin{pmatrix}
\vdots & \vdots & \vdots  & \vdots \\ 
\hat{w}_1^T & \hat{w}_2^T & \cdots  & \hat{w}_k^T \\ 
\vdots & \vdots & \vdots & \vdots
\end{pmatrix} \]
such that 
\[ \alpha + \Delta^* (S(\hat{w}_1) , S(\hat{w}_2) , \ldots , S(\hat{w}_k)) - 1 \geq \beta + S(\hat{v}).\]

\begin{myexe}
Let $P$ be the poset with adjacency matrix
\[
\begin{pmatrix}
1 & 1 & 1 & 1 & 1 & 1 & 1 \\ 
0 & 1 & 1 & 1 & 1 & 1 & 1 \\ 
0 & 0 & 1 & 1 & 1 & 1 & 0 \\ 
0 & 0 & 0 & 1 & 0 & 0 & 0 \\ 
0 & 0 & 0 & 0 & 1 & 0 & 0 \\ 
0 & 0 & 0 & 0 & 0 & 1 & 0 \\ 
0 & 0 & 0 & 0 & 0 & 0 & 1
\end{pmatrix}
.\]

The maximal elements of $P$ are $M_P = \{ 4 , 5 , 6 , 7 \}$. Thus, its radius matrix is
\[
\begin{pmatrix}
1 & 1 & 1 & 1 \\ 
1 & 1 & 1 & 1 \\ 
1 & 1 & 1 & 0 \\ 
1 & 0 & 0 & 0 \\
0 & 1 & 0 & 0 \\ 
0 & 0 & 1 & 0 \\ 
0 & 0 & 0 & 1
\end{pmatrix}
.\]

\begin{figure}[htb]
\centering
\includegraphics[scale=0.2]{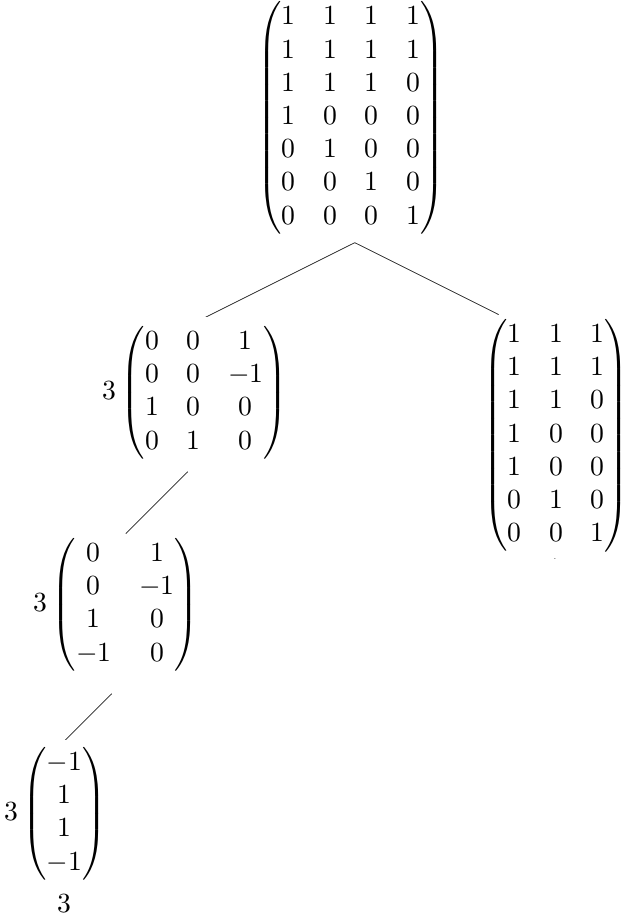}
\caption{ Pruned tree }
\end{figure}

Figure 7 shows the pruned tree using the method just described. We first reached a terminal node with discordancy $3$. That pruned all the branches on the left side since the number outside the first matrix to the left is $3$. We are left only with the matrix 
\[
\begin{pmatrix}
1 & 1 & 1 \\ 
1 & 1 & 1 \\ 
1 & 1 & 0 \\ 
1 & 0 & 0 \\
1 & 0 & 0 \\ 
0 & 1 & 0 \\ 
0 & 0 & 1
\end{pmatrix}
.\]
But the sum of each one of its columns gives us the list $(5,4,3)$. Since $\Delta^*(5,4,3) = 2$ and $\Lambda^* (P)$ must be odd, the smallest value possible for the discrepancy resulting from a terminal node of this matrix is $3$. Therefore, $\Lambda^* (P) = 3$, and $ R(P) = 4$.
\end{myexe}

\section{The Packing Vector}

We now come back to the problem of finding the packing radius of a linear code. To do this we saw that we have to find its packing vector, the code-word with minimum packing radius. One way to do this would be to calculate the packing radius of each code-word, but as we have seen that would be a big problem since we would have to solve a poset partition problem for each code-word. What we need is a way to compare the packing radius of code-words without calculating them. In other words, given two posets we want to compare their packing radius without explicitly determining them.

The simplest way to compare the packing radius of two posets is using the following well known inequality:
\[ \left \lfloor \dfrac{d_P (C)-1}{2} \right \rfloor \leq R_{d_P} (C) \leq d_P (C) -1 .\]

\begin{mypro}
Let $P$ and $Q$ be two posets of size $n$ and $m$ respectively. Then,
\[ n \leq \dfrac{m}{2} + \dfrac{[\text{$m$ is odd}]}{2}  \Rightarrow R(P) \leq R(Q).\]
\end{mypro}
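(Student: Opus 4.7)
The plan is to bound $R(P) \le n - 1$ from above and $R(Q) \ge \lceil m/2 \rceil - 1$ from below directly from the definition of the packing radius of a poset, and then chain them via the hypothesis.

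For the upper bound on $R(P)$, take the trivial partition $(M_P, \varnothing)$ of the set of maximal elements; one has $\max\{|\langle M_P\rangle|, |\langle \varnothing\rangle|\} = |\langle M_P\rangle| = n$, whence $R(P) \le n - 1$. For the lower bound on $R(Q)$, observe that for any partition $(A,B)$ of $M_Q$, ideal generation distributes over unions of generators, so $\langle A\rangle \cup \langle B\rangle = \langle M_Q\rangle = Q$. Thus $|\langle A\rangle| + |\langle B\rangle| \ge m$, the larger of the two is at least $\lceil m/2\rceil$, and taking the minimum over partitions gives $R(Q) \ge \lceil m/2\rceil - 1$.

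Splitting on the parity of $m$ one checks the identity $m/2 + [\text{$m$ is odd}]/2 = \lceil m/2\rceil$, so the hypothesis rewrites as $n \le \lceil m/2\rceil$, i.e.\ $n - 1 \le \lceil m/2\rceil - 1$. Chaining the three inequalities yields
\[
R(P) \le n - 1 \le \left\lceil \tfrac{m}{2} \right\rceil - 1 \le R(Q),
\]
which is the desired conclusion. There is no substantive obstacle here; the work is essentially the parity identity for the Iverson bracket together with the universal sandwich $\lceil |P|/2\rceil - 1 \le R(P) \le |P| - 1$ valid for any poset $P$, which itself follows in two lines from the definition of the packing radius of a poset.
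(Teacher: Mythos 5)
Your proof is correct and follows essentially the same route as the paper's: both chain the upper bound $R(P)\le n-1$ with the lower bound $R(Q)\ge \lceil m/2\rceil - 1$ through the hypothesis. The only difference is cosmetic --- the paper simply invokes the known sandwich inequality $\left\lfloor (|P|-1)/2\right\rfloor \le R(P)\le |P|-1$, whereas you re-derive both bounds from the definition via the trivial partition and the covering $\langle A\rangle\cup\langle B\rangle = Q$, which is a harmless (and slightly more self-contained) elaboration.
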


\begin{proof}
Suppose
\[ n \leq \dfrac{m}{2} + \dfrac{[\text{$m$ is odd}]}{2} .\]
Then, by the above inequality
\[R(P) + 1 \leq n ,\]
and therefore,
\[ R(P) \leq \dfrac{m}{2} + \dfrac{[\text{$m$ is odd}]}{2} -1 .\]
But the above inequality also tells us that
\[\dfrac{m}{2} + \dfrac{[\text{$m$ is odd}]}{2} -1 \leq R(Q) \]
from where the result follows.
\end{proof}

We can therefore eliminate code-words which are much greater than the minimum weight of the code.

For the next result we will need the following definition:

\begin{mydef}
Let $P=(A, \preceq_P)$ and $Q=(B, \preceq_Q)$ be two posets. We say $P$ is a subposet of $Q$, denoted by $P \subseteq Q$, if
\[ A \subseteq B \]
and
\[ x \preceq_Q y \Rightarrow x \preceq_P y,  \hspace{10pt} \forall x,y \in A.\]
\end{mydef}

\begin{myteo}
Let $P$ and $Q$ be posets. Suppose there is another poset $P_2 \subseteq Q$ isomorphic to $P$. Then,
\[R(P) \leq R(Q) .\]
\end{myteo}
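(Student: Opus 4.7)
The plan is to reduce to the case $P = P_2$ via isomorphism invariance, then to transport an optimal partition of $M_Q$ down to a partition of $M_P$ that is at least as good. Since $R(\cdot)$ depends only on the cardinalities of ideals generated by subsets of maximal elements, and since an isomorphism $\phi : P \to P_2$ carries ideals to ideals of the same size and partitions of $M_P$ to partitions of $M_{P_2}$, we have $R(P) = R(P_2)$. Hence it suffices to prove $R(P_2) \leq R(Q)$ under $P_2 \subseteq Q$, and for notational convenience I will just write $P$ for $P_2$.

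Next I would pick an optimal partition $(A_Q, B_Q)$ of $M_Q$, meaning $\max\{|\langle A_Q\rangle_Q|, |\langle B_Q\rangle_Q|\} = R(Q)+1$, and use it to split $M_P$. Because $Q = \langle M_Q\rangle_Q = \langle A_Q\rangle_Q \cup \langle B_Q\rangle_Q$ and $M_P \subseteq P \subseteq Q$, every $x \in M_P$ lies in at least one of these two $Q$-ideals. Define
\[
A_P := \{\, x \in M_P : x \in \langle A_Q\rangle_Q \,\}, \qquad B_P := M_P \setminus A_P,
\]
so that $B_P \subseteq \langle B_Q\rangle_Q$ and $(A_P, B_P)$ is a partition of $M_P$.

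The heart of the argument is the inclusion $\langle A_P\rangle_P \subseteq \langle A_Q\rangle_Q$ (and analogously for $B$). Given $z \in \langle A_P\rangle_P$, there is some $x \in A_P$ with $z \preceq_P x$; the subposet hypothesis then yields $z \preceq_Q x$, and since $x \in \langle A_Q\rangle_Q$ there is $a \in A_Q$ with $x \preceq_Q a$, so transitivity in $Q$ gives $z \preceq_Q a$, i.e.\ $z \in \langle A_Q\rangle_Q$. The same argument works verbatim for $B$. Taking cardinalities and then maxima, we conclude
\[
\max\{|\langle A_P\rangle_P|,|\langle B_P\rangle_P|\} \leq \max\{|\langle A_Q\rangle_Q|,|\langle B_Q\rangle_Q|\} = R(Q)+1,
\]
and since $R(P)+1$ is the minimum of the left-hand-side over all partitions of $M_P$, we obtain $R(P) \leq R(Q)$.

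The main obstacle is purely bookkeeping: one must be careful about which poset's order is being used at each step, since ideals in $P$ and ideals in $Q$ live in different partial orders on overlapping ground sets. The subposet condition is invoked exactly once, to promote a $\preceq_P$-relation to a $\preceq_Q$-relation so that $P$-ideals inject into $Q$-ideals; the rest is a routine application of $Q = \langle M_Q\rangle_Q$ to guarantee that every element of $M_P$ can be allocated to one side of the induced partition.
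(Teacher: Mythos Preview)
Your proof is correct and follows essentially the same strategy as the paper's: transport an optimal partition for $Q$ down to $P_2$ and observe that the generated ideals can only shrink, then invoke isomorphism invariance. The only cosmetic difference is that the paper partitions all of $Q$ and simply intersects with $P_2$, whereas you partition $M_Q$ and assign each $x\in M_P$ according to which $Q$-ideal it falls into; these are equivalent ways of inducing the same restricted partition, and your version spells out the ideal-inclusion step that the paper leaves implicit.
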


\begin{proof}
Let $(A,B)$ be a partition of $Q$. Then, $(A \cap P_2 , B \cap P_2)$ is a partition of $P_2$ with the property that 
\[ \omega_{P_2} (A \cap P_2) \leq \omega_Q (A) \]
and
\[ \omega_{P_2} (B \cap P_2) \leq \omega_Q (B) .\]
Thus, $R(P_2) \leq R(Q)$. Now, since the packing radius is a property of the poset, it is invariant under isomorphism, and therefore, $R(P_2) = R(P)$.
\end{proof}

With this theorem we can calculate the packing radius of a poset linear code when the poset is hierarchical.

\begin{mypro}
Let $P$ be a hierarchical poset, $C \subseteq \mathbb{F}_q^n$ be a $P$-linear code and $c \in C$ be a minimum weight code-word. If $I_c$ is the ideal generated by the support of $c$ and $M_{I_c} = \{ x_1 , x_2, \ldots , x_m \}$ are the maximal elements of $I_c$, then
\[ R_P(C) = \omega_P (C) + \dfrac{[\text{$m$ is odd}]}{2} - \dfrac{m}{2} -1 .\]
\end{mypro}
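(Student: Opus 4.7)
The plan is to combine three facts already established in the paper: by the proposition expressing the packing radius of a code as a minimum, $R_P(C) = \min_{x \in C^*} R_{d_P}(x)$; by Corollary 1, each $R_{d_P}(x)$ equals $R(I_x)$ where $I_x$ is the ideal generated by $supp(x)$; and the hierarchical formula just established gives a closed form for $R(I_x)$ whenever $I_x$ is hierarchical. So there are two things to verify: (i) every $I_x$ is itself a hierarchical poset, so that the formula applies to $I_c$ and evaluates to the claimed expression, and (ii) the minimum-weight codeword $c$ actually realizes the minimum in $\min_{x \in C^*} R_{d_P}(x)$.

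Point (i) follows immediately from the level structure of $P$. If $L_1, \ldots, L_h$ are the levels of $P$, then any ideal that meets level $L_j$ must contain all of $L_1 \cup \cdots \cup L_{j-1}$, since these sit below every element of $L_j$; hence every ideal has the shape $L_1 \cup \cdots \cup L_{j-1} \cup S$ with $S \subseteq L_j$ and is itself hierarchical with levels $L_1, \ldots, L_{j-1}, S$. Writing $I_c = L_1 \cup \cdots \cup L_{k-1} \cup S_c$ with $|S_c| = m$, and applying the hierarchical formula to $I_c$ (whose size is $\omega_P(c) = \omega_P(C)$), gives $R_{d_P}(c) = R(I_c) = \omega_P(C) + [m \text{ is odd}]/2 - m/2 - 1$, precisely the stated expression. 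Note that the set of maximal elements of $I_c$ equals $S_c$, because every element sitting at a lower level is dominated by any element of $S_c \subseteq L_k$, so the $m$ in the formula agrees with the $m$ in the statement of the proposition.

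The main obstacle is point (ii), and I plan to attack it through Theorem 8: it suffices to exhibit an isomorphic copy of $I_c$ inside $I_{c'}$ for every $c' \in C^*$. Write $I_{c'} = L_1 \cup \cdots \cup L_{k'-1} \cup S_{c'}$ with $|S_{c'}| = m'$. The minimum-weight hypothesis $\omega_P(c) \leq \omega_P(c')$ rules out $k' < k$, since in that case $\omega_P(c') \leq |L_1| + \cdots + |L_{k-1}| < \omega_P(c)$. If $k' = k$, the weight inequality forces $m \leq m'$, so any $m$-subset of $S_{c'}$ together with the lower levels forms a subset of $I_{c'}$ whose induced subposet is hierarchical with level sizes matching $I_c$ and is therefore isomorphic to $I_c$ (each level being an anti-chain). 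If $k' > k$, then the whole of $L_k$ sits inside $I_{c'}$, and any $m$-subset of $L_k$ together with the lower levels yields an isomorphic copy of $I_c$ the same way. In either case Theorem 8 gives $R(I_c) \leq R(I_{c'})$, i.e. $R_{d_P}(c) \leq R_{d_P}(c')$. Combined with step (i), this shows that $c$ is a packing vector of $C$ and that $R_P(C)$ equals the claimed expression.
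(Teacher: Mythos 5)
Your proposal is correct and follows essentially the same route as the paper: the paper likewise reduces the problem to comparing two ideals $I$, $J$ of the hierarchical poset with $|I|\leq |J|$, notes that either $I\subseteq J$ or their maximal elements lie in the same level so that $I$ embeds isomorphically into $J$, invokes the subposet comparison theorem (Theorem 9 in the paper's numbering, which you cite as Theorem 8) to conclude that a minimum-weight codeword is a packing vector, and then applies the closed formula for hierarchical posets. Your write-up simply supplies more detail than the paper does, in particular the explicit verification that every ideal of a hierarchical poset is hierarchical and the case analysis on the levels $k'$ versus $k$.
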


\begin{proof}
Let $I$ and $J$ be two ideals in $P$. Since $P$ is hierarchical, then so are $I$ and $J$. Suppose $ |I| \leq |J|$. Then there are two possibilities: $I \subseteq J$ or the maximal elements of $I$ are in the same level of the hierarchy. In the first case it is clear that $R(I) \leq R(J)$. In the second case, $I$ must have less maximal elements than $J$, and therefore, there exists an isomorphism between $I$ and a subset of $J$. A direct application of Theorem 9 gives us $R(I) \leq R(J)$.

Thus, any minimal weight vector will be the packing vector and Proposition 2 yields our result
\end{proof}

In the last section, we saw that the packing radius of a poset is a property of its radius matrix. It must, therefore, be possible to compare the packing radius of two posets by comparing their radius matrices.

Our next result shows how to transform our poset into a simpler one without modifying its packing radius.

\begin{myteo}
Let $P$ be a poset of size $n$. Then, there exists a poset $Q$ of size $n$ which we call the \textbf{standard form of $P$} such that:
\begin{itemize}
\item $R(Q) = R(P)$.
\item Every element of $Q$ is either maximal or minimal.
\end{itemize}
\end{myteo}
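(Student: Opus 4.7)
The plan is to construct $Q$ explicitly from $P$ by ``flattening'' everything below the maximal level, and then use the radius-matrix characterization (Theorem 7) to check that the packing radius is unchanged.

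Concretely, I would take $Q$ to have the same underlying set $[n]$ as $P$, and define the order by
\[ x \preceq_Q y \iff x = y, \text{ or } y \in M_P \text{ and } x \preceq_P y. \]
The first step is to verify that $\preceq_Q$ is indeed a partial order. Reflexivity is built in; antisymmetry is immediate, since any nontrivial relation $x \preceq_Q y$ forces $y \in M_P$ and $x \ne y$, which precludes the reverse relation (either $x \notin M_P$, or $x \in M_P$ but then $x \preceq_P y$ with both maximal in $P$ forces $x=y$). For transitivity, if $x \preceq_Q y \preceq_Q z$ with all three distinct, then $y, z \in M_P$ and $y \preceq_P z$, which again forces $y = z$ in $P$ since $y$ is maximal, a contradiction; so the only nontrivial chains of length two collapse.

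The second step is to identify the maximal and minimal elements of $Q$. By construction, an element $y$ is maximal in $Q$ iff no $z \ne y$ satisfies $y \preceq_Q z$, which happens iff $y \in M_P$ (because if $y \notin M_P$, the definition gives no relation $y \preceq_Q z$ for $z \ne y$, making $y$ maximal only in the trivial sense; conversely one checks this matches $M_P$). A cleaner way to say it: $M_Q = M_P$, and every element of $[n] \setminus M_P$ is minimal in $Q$ because the only elements above it in $Q$ lie in $M_P$. So the structural condition of the theorem is satisfied.

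The final and main step is to show $R(Q) = R(P)$. By Theorem 7 it suffices to show that $P$ and $Q$ have the same radius matrix, i.e.\ for each $y \in M_P = M_Q$ the adjacency vector $\hat{y}$ is identical in both posets. Equivalently, I must show $\langle y \rangle_P = \langle y \rangle_Q$ as subsets of $[n]$. The inclusion $\langle y \rangle_Q \subseteq \langle y \rangle_P$ is immediate from the definition of $\preceq_Q$. For the reverse inclusion, take $x \preceq_P y$; since $y \in M_P$, the defining clause of $\preceq_Q$ gives $x \preceq_Q y$, so $x \in \langle y \rangle_Q$. The only subtle point is that $P$ could have a maximal element $z$ with $z \preceq_P y$ for $y$ maximal, but then $z = y$ by maximality, so no maximal element of $P$ other than $y$ itself lies in $\langle y \rangle_P$, and the adjacency pattern on maximals is preserved. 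Thus the radius matrices coincide exactly, giving $R(Q) = R(P)$.

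The only mildly delicate part is the transitivity check and, correspondingly, verifying that the ideals of maximal elements are preserved; both rely on the single observation that maximal elements of $P$ are pairwise incomparable, which keeps the new order from introducing or destroying any relation that matters for the radius matrix.
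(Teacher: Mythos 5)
Your proposal is correct and follows essentially the same route as the paper: the paper defines $Q$ by taking the adjacency matrix that agrees with the radius matrix of $P$ on the columns of maximal elements and is zero elsewhere off the diagonal, which is exactly your order ``$x \preceq_Q y$ iff $x=y$ or ($y \in M_P$ and $x \preceq_P y$)'', and then concludes $R(Q)=R(P)$ because the radius matrices coincide. Your write-up is somewhat more careful (you verify transitivity and that $M_Q = M_P$ explicitly, modulo one garbled parenthetical that your subsequent ``cleaner'' restatement corrects), but the underlying argument is the same.
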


\begin{proof}
Let $M$ be the radius matrix of $P$ and $A$ its adjacency matrix. Note that $M$ is a sub-matrix of $A$. We define $Q$ as the poset whose adjacency matrix coincides with $M$ in the corresponding columns and is zero elsewhere (other than the main diagonal).Since $Q$ has the same radius matrix as $P$, $R(Q) = R(P)$. Also, since the only relations in $Q$ involving different elements are in the sub-matrix $M$ whose columns correspond to maximal elements, every element of $Q$ is maximal or minimal.
\end{proof}

\begin{myexe}
Let $P$ be the poset with adjacency matrix
\[A_P =  \begin{pmatrix}
1 & 0 & 0 & 1 & 1 & 1 & 1 & 1  \\
0 & 1 & 0 & 1 & 0 & 0 & 1 & 0  \\ 
0 & 0 & 1 & 0 & 1 & 0 & 0 & 1  \\
0 & 0 & 0 & 1 & 0 & 0 & 1 & 0  \\
0 & 0 & 0 & 0 & 1 & 0 & 0 & 1  \\
0 & 0 & 0 & 0 & 0 & 1 & 0 & 0  \\ 
0 & 0 & 0 & 0 & 0 & 0 & 1 & 0  \\
0 & 0 & 0 & 0 & 0 & 0 & 0 & 1  
\end{pmatrix}
.\]
To find the maximal elements of $P$ we search for the rows in which only one $1$ appears. In this case, rows $6$, $7$ and $8$. The radius matrix of $P$ is, therefore, the matrix whose columns are the last three columns of $A_P$. Zeroing all the entries of $A_P$ other than those in the main diagonal and in the last four columns we obtain the following matrix
\[A_Q =  \begin{pmatrix}
1 & 0 & 0 & 0 & 0 & 1 & 1 & 1  \\
0 & 1 & 0 & 0 & 0 & 0 & 1 & 0  \\ 
0 & 0 & 1 & 0 & 0 & 0 & 0 & 1  \\
0 & 0 & 0 & 1 & 0 & 0 & 1 & 0  \\
0 & 0 & 0 & 0 & 1 & 0 & 0 & 1  \\
0 & 0 & 0 & 0 & 0 & 1 & 0 & 0  \\ 
0 & 0 & 0 & 0 & 0 & 0 & 1 & 0  \\
0 & 0 & 0 & 0 & 0 & 0 & 0 & 1  
\end{pmatrix}
,\]
which is the adjacency matrix for the standard form of $P$.

\begin{figure}[htb]
\centering
\includegraphics[scale=0.2]{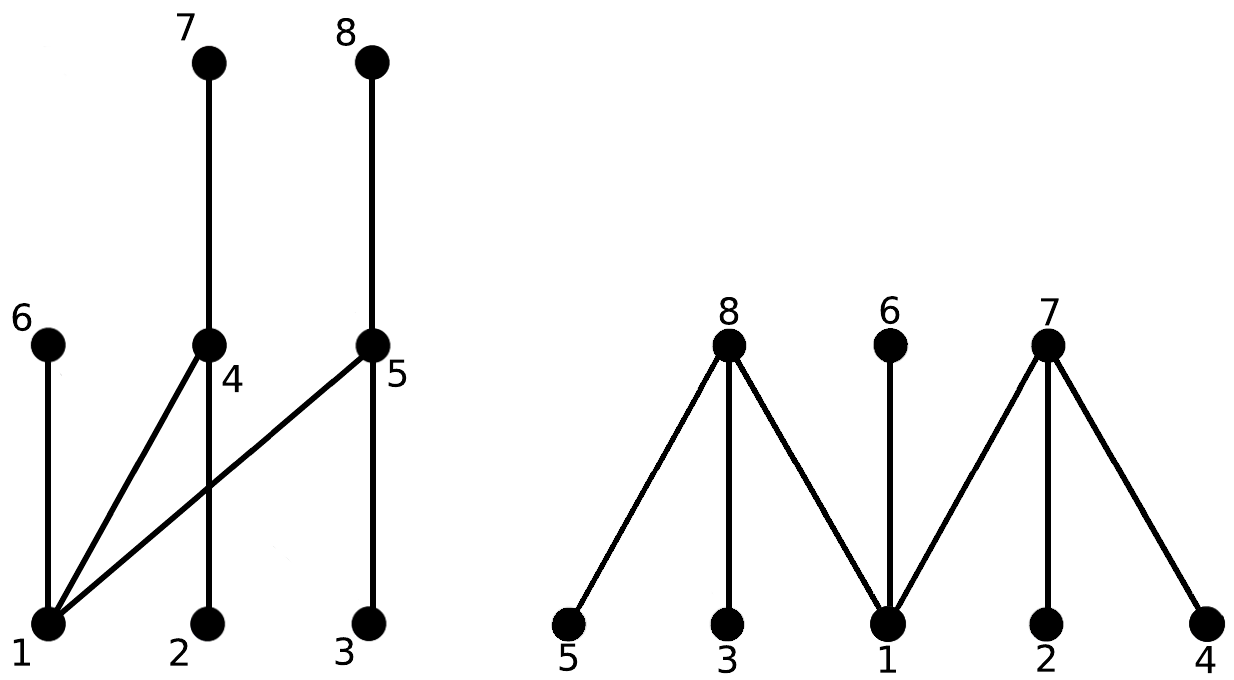}
\caption{Posets $P$, on the left, and its standard form, on the right.}
\end{figure}

\end{myexe}

By joining Theorem 10 with Theorem 9 we can compare the packing radius of two posets by comparing there standard forms.

We now generalize the notion of the radius matrix.

\begin{mydef}
Let $P$ be a poset. We call a matrix $M$ an \textbf{extended radius matrix of $P$} or an ER-matrix of $P$ if $R (M) = R (P)$.
\end{mydef}

Note that the radius matrix of a poset $P$ is an ER-matrix of it.

Given an RE-matrix of a poset $P$ we show how to construct others.

\begin{mypro}
Let $P$ be a poset and $M$ be an RE-matrix of $P$. Then the following operations on $M$ preserve the fact that it is an ER-matrix of $P$:
\begin{enumerate}
\item Swapping two rows.
\item Swapping two columns.
\item Adding or removing a null row, i.e. a row only with zeros.
\item Adding or removing a column whose support is contained in the support of a column of $M$.
\end{enumerate}
\end{mypro}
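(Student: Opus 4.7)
My strategy is to observe that $R(M) = n/2 + \Lambda^*(M)/2 - 1$ depends only on (a) the number $n$ of rows of $M$ containing at least one nonzero entry, and (b) the minimum discordancy $\Lambda^*(M)$. Thus it suffices to verify that each of the four operations preserves both quantities. Items (1)--(3) are routine symmetries; item (4) is the substantive case.

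For (1) and (2), both $n$ and $\Lambda^*(M)$ are manifestly invariant: the terminal discordancy $\Lambda(\hat{v}) = |\Re(S(\hat{v}))| + \Im(S(\hat{v}))$ depends only on the multiset of entries of $\hat{v}$ (so row swaps are absorbed), and the recursive definition of $\Lambda^*(M)$ explicitly allows any choice of distinct columns $(j,k)$ (so column swaps are absorbed). For (3), a null row does not contribute to $n$, and since $0$ is an identity for $\oplus$ and satisfies $0 \ominus 0 = 0$, a null row stays $0$ throughout the recursion and affects no terminal vector.

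For (4), let $\hat{c}$ be a $\{0,1\}$-column whose support is contained in $\mathrm{supp}(\hat{c}')$ for some column $\hat{c}'$ of $M$. No new nonzero rows are created, so $n$ is preserved. To show $\Lambda^*(M \cup \{\hat{c}\}) = \Lambda^*(M)$, I would prove both inequalities. For the upper bound I would exploit the freedom of pair selection in the recursive definition of $\Lambda^*$ by first applying $\oplus$ to $\hat{c}$ and $\hat{c}'$: since $0 \oplus x = x$ and $1 \oplus 1 = 1$, and every nonzero entry of $\hat{c}$ is matched by a $1$ of $\hat{c}'$, we get $\hat{c} \oplus \hat{c}' = \hat{c}'$, so the resulting matrix is precisely $M$. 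Hence $\Lambda^*(M \cup \{\hat{c}\}) \leq \Lambda^*(M)$.

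For the reverse inequality I would use the partition interpretation: Lemma 5 combined with Theorem 6 shows that $\Lambda^*$ of a $\{0,1\}$-matrix equals the minimum over partitions $(A,B)$ of its columns of
\[ \Lambda(A,B) = \bigl| \, |\mathrm{supp}(A) \setminus \mathrm{supp}(B)| - |\mathrm{supp}(B) \setminus \mathrm{supp}(A)| \, \bigr| + |\mathrm{supp}(A) \cap \mathrm{supp}(B)|, \]
with $\mathrm{supp}(A) = \bigcup_{c \in A} \mathrm{supp}(c)$. Given a partition $(A_+, B_+)$ of $M \cup \{\hat{c}\}$, form $(A,B)$ by removing $\hat{c}$. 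If $\hat{c}$ and $\hat{c}'$ lie on the same side, the containment makes $\hat{c}$ redundant in the support of that side, so $\Lambda(A_+,B_+)=\Lambda(A,B)$. If they lie on opposite sides, a subset $D \subseteq \mathrm{supp}(\hat{c})$ migrates from ``in $A$ only'' to ``in $A \cap B_+$'', and the triangle inequality $\bigl| a - b - |D| \bigr| + |D| \geq |a-b|$ (with $a,b$ the cardinalities of the respective set differences) yields $\Lambda(A_+,B_+) \geq \Lambda(A,B)$. Either way, $\Lambda^*(M) \leq \Lambda^*(M \cup \{\hat{c}\})$. The main obstacle lies in this final case analysis for (4), but each case collapses to a single triangle-inequality bound once the right invariants have been identified.
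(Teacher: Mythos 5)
Your proof is correct, and its skeleton matches the paper's: reduce everything to invariance of the number of non-null rows and of $\Lambda^*$, dispose of operations (1)--(3) as symmetries of the differencing recursion, and for (4) exploit the freedom of pair choice in the definition of $\Lambda^*$ to fuse the new column $\hat{c}$ with its dominating column $\hat{c}'$ first, observing that $\hat{c}\oplus\hat{c}'=\hat{c}'$ returns exactly $M$ and hence gives $\Lambda^*(M\cup\{\hat{c}\})\leq\Lambda^*(M)$. Where you diverge is the reverse inequality. The paper stays inside the tree: it notes that $\hat{c}'\ominus\hat{c}$ equals $\hat{c}'$ with $i$'s planted where $\hat{c}$ has a $1$, and since an $i$ persists under both operations and never decreases a terminal discordancy (trading a $\pm1$ contribution to $|\Re(S(\hat{v}))|$ for a $+1$ contribution to $\Im(S(\hat{v}))$), the $\ominus$ branch cannot beat the $\oplus$ branch. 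You instead go global, using the partition characterization of $\Lambda^*$ coming from Lemma 5 and Theorem 6 (extended in the natural way to arbitrary $\{0,1\}$-matrices, with $\langle A\rangle=\bigcup_{c\in A}\mathrm{supp}(c)$) and showing by a short triangle-inequality computation that deleting $\hat{c}$ from any partition of the enlarged column set never increases the discordancy. Your route is somewhat more rigorous: the paper's claim that the $\ominus$ subtree ``is the same tree'' as that of $M$ is informal, since that subtree really operates on a matrix containing $i$'s, whereas your partition argument needs no such identification. Both arguments, like the paper's, tacitly assume the added column is a $\{0,1\}$-vector (otherwise $\hat{c}\oplus\hat{c}'=\hat{c}'$ can fail, e.g. $1\oplus(-1)=i$), and both lean on the paper's unproved assertion that $\Lambda^*$ is independent of the order in which column pairs are chosen; neither is a defect relative to the paper.
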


\begin{proof}
Note that in all cases the number of non-null rows remains the same, so that it is sufficient to show that the minimum discordancy must remain equal.

The first three properties are true since these operations do not alter the results of the differencing method in any way.

The fourth property is true for the following reason: Let $\hat{v}$ be a column of $M$. If we add a column $\hat{w}$ to $M$ such that $supp(\hat{w}) \subseteq supp(\hat{v})$, then $\hat{v} \oplus \hat{w} = \hat{v}$ and $\hat{v} \ominus \hat{w}$ is equal to $\hat{v}$ but with $i$'s in the rows in which $\hat{w}$ has value 1. Thus, associating $\hat{v}$ with $\hat{w}$ leads to terminal nodes with smaller discordancies, but this sub-tree will be the same one as that of $M$.
\end{proof}

An interesting fact is that by the fourth property of the last proposition, the adjacency matrix $A$ of a poset $P$ is an ER-matrix of $P$, i.e. $R(P) = R(A)$.

\begin{myexe}
Let $P$ and $Q$ be two posets with adjacency matrices
\[ A_P = \begin{pmatrix}
1 & 1 & 1 & 1 & 1 \\ 
0 & 1 & 0 & 1 & 0 \\ 
0 & 0 & 1 & 1 & 1 \\ 
0 & 0 & 0 & 1 & 0 \\
0 & 0 & 0 & 0 & 1 
\end{pmatrix}
\]
and
\[ A_Q = \begin{pmatrix}
1 & 0 & 1 & 1 & 1 \\ 
0 & 1 & 0 & 1 & 1 \\ 
0 & 0 & 1 & 0 & 1 \\ 
0 & 0 & 0 & 1 & 0 \\
0 & 0 & 0 & 0 & 1 
\end{pmatrix}
\]
respectively.

Lets show that $A_P$ is an ER-matrix of $Q$.

Starting by the matrix $A_Q$, using property $4$ of Proposition 7, we can remove the first three columns. Thus,
\[  \begin{pmatrix}
1 & 1 \\ 
1 & 1 \\ 
0 & 1 \\ 
1 & 0 \\
0 & 1 
\end{pmatrix}
\]
is an ER-matrix of $Q$.

Using properties $1$ and $2$ we swap both columns, rows $2$ and $3$, and rows $4$ and $5$, resulting in
\[  \begin{pmatrix}
1 & 1 \\ 
1 & 0 \\ 
1 & 1 \\ 
1 & 0 \\
0 & 1 
\end{pmatrix}
.\]
But using property $4$ again we can get to the matrix $A_P$ since the support of the first three columns is contained in the first column of the above matrix.

Thus, $R(P) = R(Q)$.
\end{myexe}

In the next result we show how to compare the packing radius of two posets using their ER-matrices.

\begin{myteo}
Let $P$ and $Q$ be two posets with ER-matrices $A_P$ and $A_Q$ respectively. Then,
\[ supp(A_P) \subseteq supp(A_Q) \Rightarrow R(P) \leq R(Q) .\]
\end{myteo}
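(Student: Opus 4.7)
The plan is to derive the matrix analogue of Theorem 2 and then compare the two matrices partition by partition. I would first establish that for any matrix $M$ with entries in $\{0,1,-1,i\}$,
\[ R(M) + 1 = \min_{(X,Y)} \max\{\omega(X),\omega(Y)\}, \]
where $(X,Y)$ ranges over partitions of the columns of $M$ and $\omega(X)$ is the number of rows in which at least one column of $X$ has a non-zero entry. This is the same computation performed in Theorem 5, now applied to columns: for any column-partition one has $\omega(X)+\omega(Y) = n + |supp(X) \cap supp(Y)|$, with $n$ the number of non-null rows, and $\Delta(X,Y) = |\omega(X)-\omega(Y)|$, so $2\max\{\omega(X),\omega(Y)\} = n + \Lambda(X,Y)$, which combined with $R(M) = n/2 + \Lambda^*(M)/2 - 1$ yields the formula.

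Next, I would invoke Proposition 7 to pad $A_P$ and $A_Q$ to common dimensions $m \times n$: null rows can be appended by part (3), and null columns by part (4) (a null column has empty support, vacuously contained in any existing column's support). Neither operation changes the packing radius. The hypothesis $supp(A_P) \subseteq supp(A_Q)$ then reads column by column as $supp(A_P(\cdot,j)) \subseteq supp(A_Q(\cdot,j))$ for every $j \in \{1,\dots,n\}$, so for any partition $(X,Y)$ of the $n$ columns,
\[ \omega_{A_P}(X) = \Bigl|\bigcup_{j\in X} supp(A_P(\cdot,j))\Bigr| \leq \Bigl|\bigcup_{j\in X} supp(A_Q(\cdot,j))\Bigr| = \omega_{A_Q}(X), \]
and analogously for $Y$. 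Hence $\max\{\omega_{A_P}(X),\omega_{A_P}(Y)\} \leq \max\{\omega_{A_Q}(X),\omega_{A_Q}(Y)\}$, and taking the minimum over all partitions gives $R(A_P) \leq R(A_Q)$. Since $A_P$ and $A_Q$ are ER-matrices of $P$ and $Q$, this is $R(P) \leq R(Q)$.

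The step I expect to be the main obstacle is the initial one: justifying the column-partition formulation of $R(M)$ for an arbitrary $\{0,1,-1,i\}$-matrix. Theorem 6 is stated for radius matrices of posets and identifies $\Lambda(X,Y)$ with the discordancy of a terminal node of the differencing tree, so to move to the ER-matrix setting one must unwind Lemma 5 and Proposition 4 to verify that every column-partition of $M$ is realised by some expression in the differencing/associating calculus, and therefore that the recursive $\Lambda^*(M)$ really coincides with $\min_{(X,Y)} \Lambda(X,Y)$. Once that identification is in place, the monotonicity argument above is routine.
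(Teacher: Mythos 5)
Your argument is correct in outline, but it takes a genuinely different route from the paper. The paper reduces the theorem to a local statement --- zeroing a single entry of a matrix never increases its packing radius --- handles separately the case where the affected row becomes null, and then asserts that ``a $0$ will always lead to better results'' under the differencing and associating operations. Your proof is global instead: you first characterize $R(M)+1$ as $\min_{(X,Y)}\max\{\omega(X),\omega(Y)\}$ over column partitions, pad the two matrices to common shape via Proposition 7, and then compare the two objective functions partition by partition, which makes the monotonicity immediate. What your route buys is precisely the rigor the paper's key sentence lacks: the claim that zeros ``lead to better results'' is itself most cleanly justified by your partition formula, and your version also absorbs the null-row case automatically since $\omega$ counts covered rows. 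What it costs is the obligation you correctly flag: Theorem 6 is stated only for radius matrices of posets, so you must check that its proof (via Lemma 5 and Proposition 4) uses nothing about the columns beyond their being $0$--$1$ vectors whose ``ideals'' are their supports, and that every column partition is realized by some terminal node of the tree --- both of which hold, and the paper itself implicitly relies on the latter when it declares the choice of $j,k$ in the recursion for $\Lambda^*(M)$ irrelevant. One caveat worth stating explicitly in either proof: the identity $\omega(X)+\omega(Y)=n+|supp(X)\cap supp(Y)|$, and indeed the hypothesis $supp(A_P)\subseteq supp(A_Q)$ itself, presuppose that the ER-matrices have entries in $\{0,1\}$; this matches every use in the paper, but is not forced by the bare definition of an ER-matrix, so you should restrict to $0$--$1$ ER-matrices (as the paper tacitly does).
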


\begin{proof}
We will prove the equivalent statement that if we change any entry of a matrix to the value $0$, its packing radius will not increase.

Let $M$ be a matrix and $j$ and $k$ a row and a column, respectively, of $M$. Let $N$ be the matrix that coincides with $M$ everywhere except on the $j$th row and the $k$th column where it assumes the value $0$.

Note that the most that a single row can contribute to the packing radius of the matrix is $1$.

If the $j$th row of $N$ is null, than certainly, $R(N) \leq R(M)$.

If the $j$th row of $N$ is non-null, than it is a matter of comparing the minimum discordancies of $M$ and $N$. By looking at the properties of the associating and differencing operations it is clear that a $0$ will always lead to better results, in terms of lower discordancies, than other values.
\end{proof}

\begin{myexe}
Let $P$ and $Q$ be two posets with adjacency matrices
\[ A_P = \begin{pmatrix}
1 & 0 & 0 & 0 & 1 & 0 & 0 \\ 
0 & 1 & 1 & 1 & 1 & 1 & 1 \\ 
0 & 0 & 1 & 0 & 0 & 0 & 1 \\ 
0 & 0 & 0 & 1 & 1 & 0 & 1 \\
0 & 0 & 0 & 0 & 1 & 0 & 0 \\ 
0 & 0 & 0 & 0 & 0 & 1 & 0 \\ 
0 & 0 & 0 & 0 & 0 & 0 & 1
\end{pmatrix}
\]
and
\[A_Q =  \begin{pmatrix}
1 & 1 & 1 & 1 & 1 & 1 & 1 & 1 \\ 
0 & 1 & 0 & 1 & 1 & 0 & 1 & 1 \\ 
0 & 0 & 1 & 0 & 0 & 0 & 0 & 1 \\ 
0 & 0 & 0 & 1 & 1 & 0 & 0 & 1 \\
0 & 0 & 0 & 0 & 1 & 0 & 0 & 0 \\ 
0 & 0 & 0 & 0 & 0 & 1 & 0 & 0 \\ 
0 & 0 & 0 & 0 & 0 & 0 & 1 & 0 \\
0 & 0 & 0 & 0 & 0 & 0 & 0 & 1
\end{pmatrix}
\]
respectively.

\begin{figure}[htb]
\centering
\includegraphics[scale=0.2]{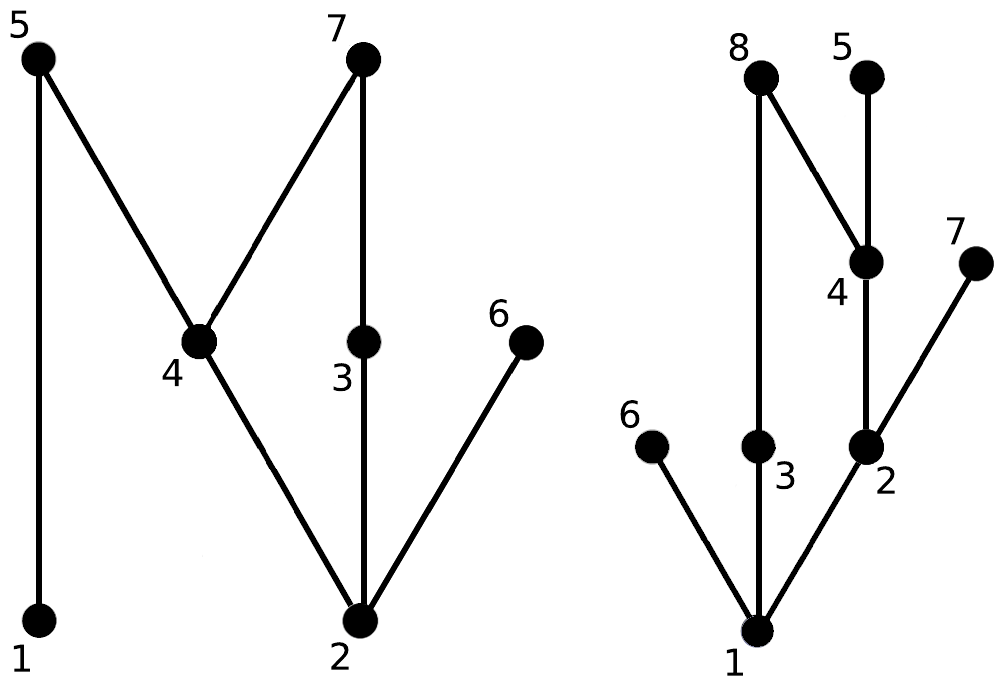}
\caption{Posets $P$ and $Q$ respectively.}
\end{figure}

The maximal elements of $Q$ can by found by searching the rows of $A_Q$ in which only one $1$ appears. In this case, the fifth, sixth, seventh, and eighth row. Thus, the radius matrix of $Q$ is
\[ \begin{pmatrix}
1 & 1 & 1 & 1 \\ 
1 & 0 & 1 & 1 \\ 
0 & 0 & 0 & 1 \\ 
1 & 0 & 0 & 1 \\
1 & 0 & 0 & 0 \\ 
0 & 1 & 0 & 0 \\ 
0 & 0 & 1 & 0 \\
0 & 0 & 0 & 1
\end{pmatrix}
.\]

Analogously, the radius matrix of $P$ is
\[ \begin{pmatrix}
1 & 0 & 0  \\ 
1 & 1 & 1  \\ 
0 & 0 & 1  \\ 
1 & 0 & 1  \\
1 & 0 & 0  \\ 
0 & 1 & 0  \\ 
0 & 0 & 1  
\end{pmatrix}
.\]

Adding a row of zeros between the fifth and sixth row, and then, adding a column of zeros between the first and second column, we obtain the following ER-matrix of $P$:
\[ \begin{pmatrix}
1 & 0 & 0 & 0 \\ 
1 & 0 & 1 & 1 \\ 
0 & 0 & 0 & 1 \\ 
1 & 0 & 0 & 1 \\
1 & 0 & 0 & 0 \\ 
0 & 0 & 0 & 0 \\ 
0 & 0 & 1 & 0 \\
0 & 0 & 0 & 1
\end{pmatrix}
.\]
But this matrix's ideal is contained in the ideal of the radius matrix of $Q$. Thus, $R(P) \leq R(Q)$.
\end{myexe}




\end{document}